\newtheorem{theorem}{Theorem}
\newtheorem{proposition}[theorem]{Proposition}
\newtheorem{lemma}[theorem]{Lemma}
\newtheorem{corollary}[theorem]{Corollary}
\newenvironment{proof}{\noindent{\it Proof. \hskip0pt}}
                     {$\square$\par\medskip}
\begin{document}


\title{Optimal indecomposable witnesses without extremality\\ as well as spanning property}

\author{Kil-Chan Ha}
\author{Hoseog Yu}
\affiliation{Faculty of Mathematics and Statistics, Sejong University, Seoul 143-747, Korea}
\date{\today}

\begin{abstract}
One of the interesting problems on optimal indecomposable entanglement witnesses is whether there exists an optimal indecomposable witness which neither has the spanning property nor is associated with extremal positive linear map. Here, we answer this question negatively by examining the extremality of the positive linear maps constructed by Qi and Hou [J. Phys. A {\bf 44}, 215305 (2100)]. 
\end{abstract}

\pacs{03.67.Mn,03.65.Ud}
\keywords{positive linear map, optimal entanglement witness, extremal positive linear map, spanning property, positive semidefinite form}

\maketitle


\section{Introduction}
A most general approach for distinguishing entanglement from separable states may be a criterion based on the notion of entanglement witness \cite{horo-1,terhal}. 
A Hermitian operator $W$ acting on a complex Hilbert space $\mathcal H\otimes \mathcal K$ is called an   entanglement witness (EW) if $W$ is not positive and ${\rm Tr}(W\rho)\ge 0$ holds for all separable states $\rho$. Thus, if $W$ is an EW, then there exists an entangled state $\rho$ such that ${\rm Tr}(W\rho)<0$ (In this case, we say that $\rho$ is dectected by $W$).  It is well known \cite{horo-1} that a state is entangled if and only if it is dectected by some entanglement witness.

For  finite dimensional Hilbert spaces, this criterion is closely connected to the duality theory \cite{eom-kye} between positivity of linear maps and separability of block matrices, through the Jamio\l kowski-Choi
isomorphism \cite{choi75-10, jami}. That is, a self-adjoint block matrix $W\in M_m\otimes M_n$ is an EW if and only if there exists a positive linear map that is not comletely positive  $\Phi: M_m\to M_n$ such that
\[
W=\frac 1m C_{\Phi}=\frac 1m\sum_{i,j=1}^m |i\rangle\langle j|\otimes \Phi(|i\rangle \langle j|),
\]
where $M_n$ denotes the $C^*$-algebra of all $n\times n$ matrices over the complex field $\mathbb C$ and the block matrix $C_{\Phi}$ is the Choi matrix of $\Phi$. We denote $W_{\Phi}=1/m\, C_{\Phi}$ for the entanglement witness associated with the positive map $\Phi$.

It is well known that decomposable positive linear maps give decomposable entanglement witnesses which take general form $W=P+Q^{\Gamma}$, where $P,Q\ge 0$ and $Q^{\Gamma}$ denotes the partial transpose of $Q$. If a given witness can not be written in this form, we call it indecomposable. Of course, indecomposable EWs are associated to indecomposable positive linear maps \cite{lew00, lew01,hakyepla}.

To characterize the set of EWs, the notion of optimality is important. An entanglement witness which detects a maximal set of entanglement is said to be optimal, as was introduced in \cite{lew00}. Since every witness can be optimized \cite{lew00}, optimal EWs are sufficient to detect all the entangled states.
So, it is significant to characterize the set of optimal EWs. Although there was a considerable effort in this direction  \cite{lew01,acin01,sarbicki,sperling,sz,ssz,cw,atl,asl,kye-dec-oew,ha-kye-pra11,xia,ha-kye-os11,kye_ritsu,hakye12,hakye12pra,hakye12prl,ha12}, complete characterization and classification of optimal EWs are far from satisfactory. 

In Ref.~\cite{lew00}, it was shown that: (1) $W$ is an optimal EW if and only if $W-Q$ is no longer an EW for an arbitrary positive semi-definite matrix $Q$; (2) $W$ is an optimal EW if $W$ has spanning property, that is $\mathcal P_{W}=\{|\xi, \eta\rangle \in \mathbb C^m\otimes \mathbb C^n\,:\, \langle \xi,\eta |W|\xi,\eta\rangle=0\}$ spans the whole space $\mathbb C^m\otimes \mathbb C^n$. From the criterion (1), we see that EW associtated to an extremal positive linear map is optimal. By an  extremal positive linear map, we mean a positive linear map which generates an extremal ray of the convex cone consisting of all positive linear maps.  That is, a positive linear map $\phi$ is said to be extremal if $\phi =\phi_1+\phi_2$ with positive linear maps $\phi_i$, should imply $\phi_i=\lambda_i \phi$ with nonnegative real numbers $\lambda_i$.
In the case of indecomposable EW, the Choi map \cite{choi75, choi-lam} and its variations \cite{kye_ext,osaka,ckl,ha-kye-os11,sengupta} are extremal and give rise to  optimal EWs. Although the extremality of a positive linear map gives us a sufficient condition for the optimality of the associated EW, it is very difficult to check whether a positive linear map is extremal. On the other hand, the criterion (2) is very pratical for checking optimality of witnesses. In fact, almost all known optimal EWs are investigated by this criterion. (See the Refs.~\cite{cw,ha-kye-pra11} and references therein). However, the spanning property is also not a necessary condition for optimality of EW. In fact,  the extremal Choi map \cite{choi75, choi-lam} introduces an optimal EW that have no spanning property. See the Ref.~\cite{asl} for examples of optimal decomposable EWs without spanning property.

Recently, in order to examine optimality of EW without spanning property, two kinds of methods are provided with examples of optimal indecomposable EWs which have no spanning property. Xia and Hou's approach is based on reinterpretation of optimal EW in terms of positive map \cite{xia,xia11}. The first author and Kye \cite{hakye12pra} checked optimality by examining the facial structure of the convex body containing the positive linear map associated with the target EW.  It remains to be shown whether these examples in \cite{xia,hakye12pra} are associated with extremal positive linear maps.
To the best of the author's knowledge, only known examples of optimal indecomposable EW without spanning property are associated with  positive linear maps which are variations of the Choi map. Then these postive linear maps are turned out to be extremal besides examples in \cite{xia,hakye12pra}. Therefore, it is natural to ask whether every optimal indecomposable EW without spanning property is associated with extremal positive linear map. The primary aim of this paper is to clarify this point.

For this purpose, we study the extremality of the indecomposable positive linear map $\Phi^{(n,k)}$ constructed by Qi and Hou \cite{xia11}. Then, we answer this question negatively by showing that $\Phi^{(n,k)}$ is not extremal whenever $n$ and $k$ have common divisors greater than 1, that is, $\gcd(n,k)>1$. Note that the optimality of associated entanglement witness $W_{\Phi^{(n,k)}}$ with no spanning property is already known \cite{xia}. It was also observed \cite{ha12} that $W_{\Phi^{(n,k)}}$ is a PPTES entanglement witness \cite{ha12} (that is, nd-OEW in the sense of \cite{lew00}) since $W_{\Phi^{(n,k)}}^{\Gamma}$ has the spanning property. See the Ref. \cite{hakye12} for PPTES entanglement witness. Consequently, $W_{\Phi^{(n,k)}}$ (with $\gcd(n,k)>1$) becomes the first example of optimal indecomposable EW, which neither has  spanning property nor is associated with extremal positive linear map. For the case of $\gcd(n,k)=1$, we try to show that $\Phi^{(n,k)}$ is extremal. First, we show that $\Phi^{(n,k)}$ is extremal if and only if $\Phi^{(n,1)}$ is extremal when $\gcd(n,k)=1$. Then we show that $\Phi^{(4,1)}$ and so $\Phi^{(4,3)}$ are indeed extremal. For general $n$, we think that the extremality of $\Phi^{(n,1)}$ can dealt with similarly. Our approach to tackle extremality is based on Choi and Lam's method \cite{choi75,choi-lam,osaka,choi80} using the correspondence between positive semidefinite biquadratic forms and positive linear maps. Through the decomposition of biquadratic form corresponding $\Phi^{(n,n/2)}$, we also reprove that $\Phi^{(n,n/2)}$ is decomposable when $n$ is even integer greater than $2$.

In the next section, we recall the positive linear maps $\Phi^{(n,k)}$ and explain how to check the extremality of those maps according to Choi and Lam's method \cite{choi75,choi-lam,osaka,choi80}. After we explore some extremal positive semidefinite forms in Section 3, we analyze the extremality of $\Phi^{(n,k)}$ in the last section. 

Throughout this note, $\sigma_{k}:\{1,2,\cdots,n\}\to \{1,2,\cdots,n\}$ denotes the permutation defined by 
$\sigma_k(i)=i+k\ {\rm mod}\ n$.

\section{Preliminaries}\label{map_form}
First, we recall \cite{xia11} the positive linear map $\Phi^{(n,k)}:M_n\to M_n$ for each $k=1,2,\ldots,n-1$ defined by 
\begin{equation}\label{map}
\Phi^{(n,k)}([a_{ij}])={\rm diag}(b_1,b_2,\cdots,b_n)-[a_{ij}]
\end{equation}
for $[a_{ij}]\in M_n$, 
where $b_{i}=(n-1)a_{ii}+a_{\sigma_k(i),\sigma_k(i)}$ for each $i=1,2,\ldots,n$ $(n\ge 3)$.

X. Qi and J. Hou \cite{xia11}  showed that $\Phi^{(n,k)}$ are indecomposable positive linear maps whenever 
either $n$ is odd or $k\neq n/2$. They also showed \cite{xia} that the associated EWs $W_{\Phi^{(n,k)}}$ are optimal EWs which have no spanning property whenver $k\neq n/2$, and $W_{\Phi^{(n,n/2)}}$ is decomposable and not optimal when $n$ is an even integer greater than $2$.  Recently, it was shown \cite{ha12} that $W_{\Phi^{(n,k)}}$'s are indeed optimal PPTES witnesses whenever $k\neq n/2$ (that is, nd-OEW in the sense \cite{lew00}). Therefore, $W_{\Phi^{(n,k)}}$ detects a maximal set of entangled states with positive partial transposes in the sense \cite{hakye12}.
Especially,  $\Phi^{(3,1)}$ and $\Phi^{(3,2)}$ are  extremal Choi maps \cite{choi75}. So these maps can be considered as extensions of  extremal Choi map in the $n$-dimensional cases. Thus, we may expect that these maps are extremal. But, in general, these maps are not extremal.
Although we can show that $\Phi^{(4,1)}$ and $\Phi^{(4,3)}$ are extremal, $\Phi^{(4,2)}$ is not extremal since $W_{\Phi^{(4,2)}}$ is not optimal.  We will also show that $\Phi^{(n,k)}$ is not extremal if $\gcd(n,k)\neq 1$. Note that $W_{\Phi^{(n,k)}}$ is still optimal in the case of $\gcd(n,k)\neq1$ as long as $k\neq n/2$. This is the point of this work.

We note that $\Phi^{(n,k)}$ maps $M_n(\mathbb R)$ into inself. Therefore, we can use Choi and Lam's method \cite{choi-lam} (see also Ref.~\cite{osaka}) to check the extremality of $\Phi^{(n,k)}$. 
For each $n\ge 4$ and $k=1,2,\cdots,n-1$, we define positive semidefinite biquadratic forms $B_{\Phi^{(n,k)}}$ by 
\begin{equation}\label{biquad_form}
\begin{aligned}
B_{\Phi^{(n,k)}}\left(\begin{array}{cccc} x_1 & x_2 & \cdots & x_n \\y_1 & y_2 & \cdots & y_n \end{array}\right)
:&=y^{\rm t}\left[ \Phi^{(n,k)}(x x^{\rm t})\right] y\\
&=(n-2)\sum_{i=1}^n x_i^2 y_i^2 +\sum_{i=1}^n x_{\sigma_k(i)}^2 y_i^2 -2 \sum_{1\le i<j\le n}x_iy_ix_jy_j,
\end{aligned}
\end{equation}
where $x=(x_1,x_2,\cdots,x_n)^{\rm t},\,y=(y_1,y_2,\cdots,y_n)^{\rm t}\in \mathbb R^n$.
(By the same way, we can define the positive semidefinite biquadratic form $B_{\phi}$ corresponding to a positive linear map $\phi$.)
Let $\mathcal P_{n,m}$ be the set of all positive semidefinite (psd) real forms in $n$ variables of degree $m$. Then each $B_{\Phi^{(n,k)}}$ belongs to $\mathcal P_{2n,4}$ since $\Phi^{(n,k)}$ is a positive linear map.
A form $F\in \mathcal P_{n,m}$ is said to be extremal if $F=F_1+F_2$, $F_i\in P_{n,m}$, should imply $F_i=\lambda_i F$ with nonnegative real numbers $\lambda_i$.
If we write $\mathcal E(\mathcal P_{n,m})$ for the set of all extremal positive semidefinite forms in $\mathcal P_{n,m}$, an elementary result in the theory of convex bodies shows that $\mathcal E(\mathcal P_{n,m})$ spans $\mathcal P_{n,m}$. 
It is well known \cite{choi-lam,osaka} that if a positive linear map $\phi:M_n\to M_n$ maps $M_n(\mathbb R)$ into itself, then the corresponding biquadratic form $B_{\phi}\in \mathcal E(\mathcal P_{2n,4})$ implies that $\phi$ is extremal in the convex cone consisting of all positive linear maps. Therefore, to show the extremality of  $\Phi^{(n,k)}$, it suffices to prove $B_{\Phi^{(n,k)}}\in \mathcal E(\mathcal P_{2n,4})$.

We also note that a psd biquadratic form $B$ gives rise to a positive linear map $\phi$ such that $B=B_{\phi}$.
Let $B(X:Y)$ be a biquadratic form where $X=(x_1,x_2,\cdots,x_m)^{\rm t}\in \mathbb R^m$ and 
$Y=(y_1,y_2,\cdots,y_n)^{\rm t}\in \mathbb R^n$.
Since this biquadratic form can be considered as a quadratic form with respect to each variable $Y$ (as well as X), we can write it in the form $\langle Y|S_X|Y\rangle$ where $S_X\in M_n$ is a symmetric matrix. Thus we get a map sending each one-dimensional projection $XX^{\rm t}\in M_m$ to $S_X$. Using linearity and hermiticity, we can extend it to a map which preserve hermiticity. It was shown by Choi that, given any positive semidefinite form, this corresponding linear map is a positive linear map 
\cite{choi75,choi80}. 

For example, for a given psd biquadratic form $B(X:Y)=(x_1y_3-x_3y_1)^2$ with $X,\,Y\in \mathbb R^3$, we get a symmetric matrix $S_X$ of the form
\[
S_X=\left( \begin{array}{ccc} x_3^2 & 0 & -x_1x_3 \\ 0 & 0 & 0 \\-x_3 x_1 & 0 & x_1^2\end{array}\right).
\]
Consequently, we obtain a positive (in fact, completely copositive) linear map $\phi:M_3\to M_3$ defined by 
\[
\phi([a_{ij}])=a_{33}|1\rangle \langle 1|-a_{13}|1\rangle \langle 3|-a_{31}|3\rangle \langle 1|+a_{11}|3\rangle \langle 3|=V[a_{ij}]^{\rm t} V^{\rm t}
\]
for $[a_{ij}]\in M_3$ and $V=|3\rangle \langle 1|-|1\rangle \langle 3|$. 
We will use this correspondence between psd biquadratic forms and positive linear maps to show that if a biquadratic form $B_{\Phi^{(n,k)}}$ is decomposed into the sum of psd biquadratic forms then the corresponding map $\Phi^{(n,k)}$ is not extremal. 

\section{Extemal positive semidefinite forms}

In this section, we explore some psd forms needed to show the extremality of $B_{\Phi^{(4,1)}}$.
For the positive linear map $\Phi^{(4,1)}$, we define a quaternary octic psd form $O_{(4,1)}(x,y,z,w)$ by 
\[
\begin{aligned}
O_{(4,1)}(x,y,z,w):=&B_{\Phi^{(4,1)}}\left( \begin{array}{cccc}
                                              yzw & zwx & wxy & xyz\\
                                                x   &   y   &    z   &  w
                                               \end{array}
                                      \right)
\\
=& x^4 z^2 w^2 +y^4 x^2 w^2 +z^4 x^2 y^2 +w^4 y^2 z^2-4x^2 y^2 z^2 w^2.
\end{aligned}
\]
We show that $O_{(4,1)}(x,y,z,w)$ is extremal in $\mathcal P_{4,8}$.
Assume that there is a psd form $F\in \mathcal{P}_{4,8}$ such that $0 \le F \le O_{(4,1)}$. 

We note that  the only possible monomial of $F$ divisible by $x^4$ is $x^4z^2w^2$. 
By applying the same idea for $y,z$ and $w$ 
we can write 
\begin{equation}\label{form_identity}
F(x,y,z,w)=H(x,y,z,w) + G(x,y,z,w),
\end{equation}
where $H (x,y,z,w)=
a\,x^4z^2w^2+b\,y^4x^2w^2+c\,z^4x^2y^2+d\,w^4y^2z^2+e\,x^2y^2z^2w^2$ and 
$G(x,y,z,w)=F(x,y,z,w)-H(x,y,z,w)$. 
From the identity~\eqref{form_identity}, we see that every monomial in $G$ contains at least one variable 
on which the degree of the monomial is odd. 
We write $G(x,y,z,w)=\gamma_{x,3}(y,z,w)x^3+
\gamma_{x,2}(y,z,w)x^2+\gamma_{x,1}(y,z,w)x+\gamma_{x,0}(y,z,w)$ and examine  $\gamma_{x,3}$.

\begin{lemma} \label{lem1}
$\gamma_{x,3}(y,z,w)$ doesn't have monomials 
$y^3w^2$, $y^3zw$, $y^3z^2$, $y^2w^3$,  $y^2z^2w$, $y^2z^3$, $yzw^3$, $z^3 w^2$
 and $z^2w^3$. Thus
$\gamma_{x,3}(y,z,w)$ has only the monomials 
$y^2zw^2, \ yz^3w$ and $yz^2w^2$. 
\end{lemma}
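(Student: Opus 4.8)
The plan is to read off the admissible monomials from the geometry of the Newton polytope of $O_{(4,1)}$. Identify each monomial $x^{e_1}y^{e_2}z^{e_3}w^{e_4}$ with the lattice point $(e_1,e_2,e_3,e_4)$. The single fact that drives everything is that, since $F\ge 0$ and $O_{(4,1)}-F\ge 0$ are psd and sum to $O_{(4,1)}$, the Newton polytope $N(F)$ is contained in that of $O_{(4,1)}$; this is the standard statement that the Newton polytope of a sum of psd forms is the convex hull of those of the summands, no cancellation being possible at a vertex, where the coefficient of a psd form is forced positive. Hence every monomial occurring in $F$ has exponent vector in $N:=\mathrm{conv}\{(4,0,2,2),(2,4,0,2),(2,2,4,0),(0,2,2,4)\}$, the hull of the exponents of the four pure terms of $O_{(4,1)}$; the fifth term $x^2y^2z^2w^2$ is their barycenter and adds no vertex. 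Since none of the five monomials making up $H$ carries the power $x^3$, the coefficient $\gamma_{x,3}$ is simply the $x^3$-part of $F$, and its monomials $y^az^bw^c$ are exactly those with $a+b+c=5$ and $(3,a,b,c)\in N$.

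Next I would make $N$ explicit. The form $O_{(4,1)}$ is invariant under the cyclic substitution $x\mapsto y\mapsto z\mapsto w\mapsto x$, which permutes the four vertices cyclically, so it suffices to find one facet and rotate it. One checks that $N$ is the tetrahedron cut out by $\ell_i\le 6$ for $i=1,2,3,4$, where $\ell_1=-e_1+2e_2+e_3$ and $\ell_2,\ell_3,\ell_4$ are its cyclic shifts; each $\ell_i$ takes the value $6$ on three of the four vertices and $-2$ on the remaining one. For a point $(3,a,b,c)$ with $a+b+c=5$ these read $2a+b\le 9$, $-a+2b+c\le 6$, $2c-b\le 3$ and $a\le c$. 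A direct run through the finitely many degree-$5$ triples $(a,b,c)$ leaves exactly $(2,1,2),(1,3,1),(1,2,2)$, i.e. $y^2zw^2$, $yz^3w$ and $yz^2w^2$, and shows each of the nine listed monomials violates at least one inequality; this is the whole content of the Lemma.

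It is worth isolating where the difficulty really lies. Six of the nine forbidden monomials, namely $y^3w^2,y^3z^2,y^2w^3,y^2z^3,z^3w^2,z^2w^3$, miss a variable, and for these no polytope is needed: the argument already used for $H$ shows that each restriction $F|_{y=0}$, $F|_{z=0}$, $F|_{w=0}$ is a single monomial (for instance $0\le F|_{z=0}\le O_{(4,1)}|_{z=0}=x^2y^4w^2$ forces $F|_{z=0}=b\,x^2y^4w^2$), so no other $z$-free, $w$-free or $y$-free monomial can appear. The genuine obstacle is the three remaining monomials $x^3y^3zw$, $x^3y^2z^2w$ and $x^3yzw^3$: all of their exponents are positive, so they survive every single-variable restriction and can only be removed by a full facet bound.

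For those three I would use the weighted substitution attached to the violated facet. For $\ell_4\le 6$, which $x^3y^3zw$ and $x^3y^2z^2w$ violate, set $x\mapsto t^2x,\ y\mapsto ty,\ z\mapsto z,\ w\mapsto t^{-1}w$, so that a monomial of exponent $(e_1,e_2,e_3,e_4)$ is multiplied by $t^{\ell_4}$. Because the leading $t$-form of a psd form is again psd, the inequality $F\le O_{(4,1)}$ forces the top $t$-degree of $F$ to be at most that of $O_{(4,1)}$, namely $6$; this is precisely $\ell_4\le 6$ on $N(F)$ and excludes the two monomials, while the cyclic analogue of this substitution, attached to $\ell_3\le 6$, disposes of $x^3yzw^3$. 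With the polytope containment in hand, this facet argument is exactly what turns the finite check of the second paragraph into a proof.
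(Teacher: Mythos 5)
Your proposal is correct, and it takes a genuinely different route from the paper's. The paper proves the lemma by three ad hoc substitutions with top-degree comparisons: $F(y^2,y,z,w)\le O_{(4,1)}(y^2,y,z,w)$ rules out $y^3w^2,y^3zw,y^3z^2,y^2w^3,y^2z^3$ (and, after two further limits, $y^2z^2w$); $F(x,y,z,x^2)\le O_{(4,1)}(x,y,z,x^2)$ rules out $z^2w^3,yzw^3$; and $\lim_{y\to 0}y^2F(x,y,1/y,w)$ rules out $z^3w^2$. You instead invoke the no-cancellation principle for psd summands (Reznick's theorem that Newton polytopes of psd summands lie in the Newton polytope of the sum) and reduce everything to one finite lattice-point check; I verified your facet description ($\ell_1=-e_1+2e_2+e_3$ and its cyclic shifts take the value $6$ on three vertices and $-2$ on the fourth, and four facet half-spaces cut out the $3$-simplex inside the degree-$8$ hyperplane), the reduced inequalities $2a+b\le 9$, $-a+2b+c\le 6$, $2c-b\le 3$, $a\le c$ at $(3,a,b,c)$, and the enumeration: exactly $(2,1,2),(1,3,1),(1,2,2)$ survive, each of the nine listed monomials violates one inequality, and the remaining degree-$5$ monomials (those divisible by $y^4$, $z^4$, $w^4$ or a fifth power, which the paper must exclude separately before stating the lemma) are also killed. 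The two methods are at heart the same kind of weighted-degree test --- the paper's substitutions $x=y^2$, $w=x^2$, $z=1/y$ are collapsed, two-variable versions of your facet functionals --- but your version buys several things: it treats all monomials of $F$ (not just the $x^3$-part) uniformly; your substitutions $x\mapsto t^2x$, $y\mapsto ty$, $w\mapsto t^{-1}w$ keep the four variables distinct, so the leading $t$-form of $F$ is literally the sum of its monomials on the violated face and cannot vanish by internal cancellation, a point the paper's ``by considering the highest degree'' leaves implicit (it happens to be harmless there only because those substitutions are injective on degree-$8$ exponent vectors, plus small sign arguments for borderline monomials like $y^2w^3$); and the facet framework transfers directly to $B_{\Phi^{(n,k)}}$ for other $(n,k)$. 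What the paper's route buys is brevity and freedom from citing any general theorem; note, though, that your reliance on Reznick is dispensable anyway, since your third paragraph proves self-containedly the only two facet bounds that the genuinely hard monomials $x^3y^3zw$, $x^3y^2z^2w$, $x^3yzw^3$ require, the other six following from the restriction argument $F|_{z=0}=b\,x^2y^4w^2$ and its analogues.
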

\begin{proof}
From the inequality 
\[
F(y^2,y,z,w) \le O_{(4,1)}(y^2,y,z,w)=(1+z^2)w^2y^8+(z^2-4w^2)z^2y^6+y^2z^2w^4,
\]
we know that $\gamma_{x,3}(y,z,w)$ (briefly, $\gamma_{x,3}$) doesn't have the monomials 
$y^3w^2$, $y^3zw$, $y^3z^2$, $y^2w^3$, $y^2z^3$ by considering the highest degree. To see that  $\gamma_{x,3}$ doesn't have the monomial $y^2z^2w$, we divide both sides of the above inequality by $y^8$ and take limit as $y\to \infty$, and then divide both sides by $w^2$ and take limit as $w\to 0$.

Inequality $F(x,y,z,x^2) \le O_{(4,1)}(x,y,z,x^2)$ implies that $\gamma_{x,3}$ doesn't have monomials 
$z^2w^3$ and $yzw^3$. 
From the inequality $\lim_{y\to 0}y^2F(x,y,1/y,w) \le \lim_{y\to 0} y^2 O_{(4,1)}(x,y,1/y,w)$, we also see that $\gamma_{x,3}$ doesn't have the mononial $z^3w^2$. 

Consequently, we have  $\gamma_{x,3}(y,z,w)=yzw(q_{11}z^2+q_{12}zw+q_{13}yw)$
for some $q_{11},\ q_{12},\ q_{13} \in \mathbb{R}$.  
\end{proof}

Like the previous lemma~\ref{lem1}, we can check which monomials do not appear 
in $\gamma_{y,3}(x,z,w)$, $\gamma_{z,3}(x,y,w)$, and 
$\gamma_{w,3}(x,y,z)$. That is, we can easily see that 
\[
\begin{aligned}
\gamma_{y,3}(x,z,w)=&\,xzw(q_{21}xz+q_{22}xw+q_{23}w^2),\\
\gamma_{z,3}(x,y,w)=&\,xyw(q_{31}x^2+q_{32}xy+q_{33}yw),\\
\gamma_{w,3}(x,y,z)=&\,xyz(q_{41}xz+q_{42}yz+q_{43}y^2),
\end{aligned}
\]
for some $q_{ij}\in \mathbb R$.
From the above identities on $\gamma_{\cdot,3}$ and \eqref{form_identity}, we have that
\begin{multline}\label{form_G}
G(x,y,z,w)=xyzw \left( xz(s_1 y^2+ s_2 w^2)+yw(s_3 x^2+s_4 z^2) \right.
\\ \left. +(s_5 x^2z^2+s_6 y^2w^2)
+ s_7 xyz^2 + s_8 yzw^2 + s_9 zwx^2 + s_{10} wxy^2 \right),
\end{multline}
where $s_i \in \mathbb{R}$. 

\begin{lemma} \label{lem2}
We have $H=aO_{(4,1)}$ in the identity~\eqref{form_identity}. 
\end{lemma}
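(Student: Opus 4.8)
The plan is to pin down the ten coefficients $s_i$ and the five coefficients $a,b,c,d,e$ by exploiting the zeros of $O_{(4,1)}$. First I would locate those zeros. Applying the arithmetic--geometric mean inequality to the four ``corner'' monomials $x^4z^2w^2$, $y^4x^2w^2$, $z^4x^2y^2$, $w^4y^2z^2$, whose product is $x^8y^8z^8w^8$, shows that their sum is at least $4x^2y^2z^2w^2$, so $O_{(4,1)}\ge 0$ with equality, for nonvanishing coordinates, exactly when $x^2=y^2=z^2=w^2$. In particular $O_{(4,1)}$ vanishes at each of the sixteen sign points $(\varepsilon_1,\varepsilon_2,\varepsilon_3,\varepsilon_4)$ with $\varepsilon_i\in\{+1,-1\}$. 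Since $0\le F\le O_{(4,1)}$ and $F$ is a nonnegative form, every such point is a global minimum of $F$; hence not only $F=0$ but also $\nabla F=0$ there. These two families of conditions are what I would use.

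For the value conditions, note that $H$ is even in each variable, so $H(\varepsilon)=a+b+c+d+e$ at every sign point, while $G(\varepsilon)=(\varepsilon_1\varepsilon_2\varepsilon_3\varepsilon_4)\,B(\varepsilon)$ collapses, using $\varepsilon_i^2=1$, to a real linear combination of the sign-characters $\varepsilon_2\varepsilon_4,\ \varepsilon_1\varepsilon_3,\ \varepsilon_1\varepsilon_2\varepsilon_3\varepsilon_4,\ \varepsilon_3\varepsilon_4,\ \varepsilon_1\varepsilon_4,\ \varepsilon_1\varepsilon_2,\ \varepsilon_2\varepsilon_3$ with coefficients $s_1+s_2,\ s_3+s_4,\ s_5+s_6,\ s_7,\ s_8,\ s_9,\ s_{10}$. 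Since these characters, together with the constant $1$, are linearly independent as functions on $\{+1,-1\}^4$, the requirement $F(\varepsilon)=0$ for all $\varepsilon$ forces $a+b+c+d+e=0$, $s_7=s_8=s_9=s_{10}=0$, and $s_2=-s_1,\ s_4=-s_3,\ s_6=-s_5$. Thus $G$ reduces to $xyzw\,B'$ with $B'=s_1xz(y^2-w^2)+s_3yw(x^2-z^2)+s_5(x^2z^2-y^2w^2)$, which in particular vanishes at every sign point.

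Next I would impose $\nabla F=0$ at the sign points. Because $B'$ vanishes there, $\nabla G(\varepsilon)=(\varepsilon_1\varepsilon_2\varepsilon_3\varepsilon_4)\,\nabla B'(\varepsilon)$, and each partial derivative evaluates at $\varepsilon$ to $\varepsilon_i$ times a diagonal contribution from $H$ plus a short combination of sign-characters coming from $\nabla B'$. For instance the $x$-equation reads $\varepsilon_1 D_x+2s_3\varepsilon_3+2s_5\varepsilon_2\varepsilon_3\varepsilon_4=0$, with $D_x=4a+2b+2c+2e$; the remaining three arise cyclically with $D_y=4b+2c+2d+2e$, $D_z=2a+4c+2d+2e$, $D_w=2a+2b+4d+2e$. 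Again the three characters in each equation are distinct, so independence forces $s_1=s_3=s_5=0$ (whence $G\equiv 0$) and $D_x=D_y=D_z=D_w=0$. Finally I would solve this linear system together with $a+b+c+d+e=0$: subtracting the sum relation from each of $\tfrac12D_x=0,\dots,\tfrac12D_w=0$ yields $a=b=c=d$ and hence $e=-4a$, giving $H=a\,O_{(4,1)}$.

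The routine parts are the two character expansions; the step needing care is the justification that $\nabla F$ vanishes at the zeros, so that the gradient, not merely the value, may be exploited, together with the bookkeeping that keeps the sign-characters distinct in every one of the four coordinate equations, since it is precisely this independence that converts the over-determined system at the sign points into the clean relations above. I expect the main obstacle to be organizing the gradient conditions uniformly enough that the character-independence argument applies verbatim in all four directions.
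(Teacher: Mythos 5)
Your proof is correct, but it takes a genuinely different route from the paper's. The paper isolates $H$ by symmetrization: since every monomial of $G$ has odd degree in at least one of $y,z,w$, averaging over the eight sign flips gives $H=\tfrac18\sum_{\epsilon_i}F(x,\epsilon_1y,\epsilon_2z,\epsilon_3w)$, hence $0\le H\le O_{(4,1)}$; the coefficients are then pinned down with no calculus at all, by evaluating on the zero set of $O_{(4,1)}$: $H(x,x,x,x)=0$ gives $a+b+c+d+e=0$, and nonnegativity of $H(x,x,z,z)=x^2z^2(x-z)(x+z)(bx^2-dz^2)$ forces $b=d$, similarly $a=b=c=d$. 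That argument uses only \eqref{form_identity}, not \eqref{form_G}. You instead work at the sixteen sign points, using both the vanishing of $F$ there and the first-order conditions $\nabla F=0$ (legitimate, since these are interior global minima of the nonnegative form $F$), and you organize the resulting equations via linear independence of the characters of $\{\pm1\}^4$; this requires the explicit shape \eqref{form_G}, which is fair game because the paper establishes it before Lemma~\ref{lem2}. I checked your computations: the value expansion with coefficients $s_1+s_2,\ s_3+s_4,\ s_5+s_6,\ s_7,\dots,s_{10}$ on distinct characters, the gradient equations with $D_x=4a+2b+2c+2e$ and cyclic analogues (e.g.\ $\varepsilon_1D_x+2s_3\varepsilon_3+2s_5\varepsilon_2\varepsilon_3\varepsilon_4=0$), and the final linear algebra giving $a=b=c=d$, $e=-4a$ are all right. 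What your approach buys is substantial: it also forces every $s_i=0$, i.e.\ $G\equiv0$, so you have in fact proved $F=aO_{(4,1)}$ outright, which is the full extremality theorem for $O_{(4,1)}$; with your argument, Lemma~\ref{Lem3}, its companion lemma, and the symmetrization step in the theorem's proof become redundant. What the paper's route buys is independence from \eqref{form_G} (hence from Lemma~\ref{lem1}) and avoidance of any minimization/differentiation argument, staying entirely within pointwise evaluations of forms; it also proves only what Lemma~\ref{lem2} asserts, keeping the logical load of each step small.
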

\begin{proof}
We note that $\sum G(x, \epsilon_1 y ,\epsilon_2 z, \epsilon_3 w)=0$, where the sum is  taken over all values $\epsilon_i\in \{-1,1\}$. So we have 
\[
H(x,y,z,w)=\frac18 \sum_{\epsilon_i \in \{ 1,-1 \}} 
F( x,\epsilon_1 y,\epsilon_2 z, \epsilon_3 w).
\] 
Therefore, we see that $0\le H \le O_{(4,1)}$.
Now, $O_{(4,1)}(x,x,x,x)=0$ implies that  
\[
H(x,x,x,x)=x^8(a+b+c+d+e)=0\Longrightarrow -e=a+b+c+d.
\]
Then, we get $H(x,x,z,z)=x^2z^2(x-z)(x+z)(bx^2-dz^2) \ge 0$, and so $b=d$. In a similar way, we can show that $a=b=c=d$. Consequently, we have $H=aO_{(4,1)}$ with $0\le a\le 1$. 
\end{proof}

From the Lemma~\ref{lem2}, the identity~\eqref{form_identity} is reduced to  
\begin{multline}\label{form_F}
F(x,y,z,w)=aO_{(4,1)}(x,y,z,w)+xyzw \left[ xz(s_1 y^2+ s_2 w^2)+yw(s_3 x^2+s_4 z^2) \right.
\\  \left.+(s_5 x^2z^2+s_6 y^2w^2)
+ s_7 xyz^2 + s_8 yzw^2 + s_9 zwx^2 + s_{10} wxy^2 \right] .
\end{multline}
To arrive at the goal $O_{(4,1)}\in \mathcal E(\mathcal P_{4,8})$, we need two more Lemmas.

\begin{lemma} \label{Lem3}
For $0\le a \le 1$, if $F_1=aO_{(4,1)}+x^2yz^2 w (\alpha y^2 +\beta w^2)\in \mathcal P_{4,8}$, 
then $\alpha =\beta =0$. 
\end{lemma}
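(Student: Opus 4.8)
The plan is to exploit the fact that $O_{(4,1)}$ is a sum of four monomials tied together by an AM--GM inequality, so that its zero set is easy to locate, and then to test the nonnegativity of $F_1$ by restricting it to cleverly chosen one-dimensional families passing through those zeros. The point is that at a zero of $O_{(4,1)}$ the form $F_1$ reduces to the perturbation term $x^2yz^2w(\alpha y^2+\beta w^2)$ alone (since $a\ge 0$), so nonnegativity there constrains the coefficients directly.

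First I would record that $O_{(4,1)}$ vanishes at every point with $|x|=|y|=|z|=|w|$; in particular $O_{(4,1)}(1,\pm1,1,1)=0$. Evaluating $F_1$ at these two points, where the $aO_{(4,1)}$ contribution is killed, gives
\[
F_1(1,1,1,1)=\alpha+\beta\ge 0,\qquad F_1(1,-1,1,1)=-(\alpha+\beta)\ge 0,
\]
so $\alpha+\beta=0$, i.e. $\beta=-\alpha$. This already reduces the perturbation to $\alpha\,x^2yz^2w(y^2-w^2)$.

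With $\beta=-\alpha$ in hand, the decisive step is to restrict to the slice $x=z=w=1$. A short computation using $O_{(4,1)}(1,y,1,1)=(y^2-1)^2$ yields the clean factorization
\[
F_1(1,y,1,1)=(y^2-1)^2+\alpha\,y(y^2-1)=(y^2-1)\,(y^2+\alpha y-1).
\]
Nonnegativity of this one-variable quartic for all real $y$ is what forces $\alpha=0$: near $y=1$ the first factor has a simple sign change, $(y^2-1)\approx 2(y-1)$, while the second factor takes the nonzero value $\alpha$, so $F_1(1,y,1,1)\approx 2\alpha\,(y-1)$ changes sign as $y$ crosses $1$ unless $\alpha=0$. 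Hence $\alpha=0$, and therefore $\beta=-\alpha=0$ as claimed. (Equivalently, one may observe that for the product to stay nonnegative the factor $y^2+\alpha y-1$ must vanish at the simple zeros $y=\pm1$ of $y^2-1$, which again forces $\alpha=0$.)

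The main obstacle is really just bookkeeping: choosing slices on which $O_{(4,1)}$ degenerates to a perfect square while the perturbation survives in a tractable form. The genuinely load-bearing observation is the factorization in the last display, which converts the global positivity hypothesis into a local sign condition at a simple zero; everything else is routine evaluation. The first step can also be packaged as the single remark that the perturbation, restricted to the zero locus $|x|=|y|=|z|=|w|$, must vanish identically, which is where the relation $\beta=-\alpha$ comes from.
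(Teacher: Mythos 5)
Your proof is correct and follows essentially the same route as the paper's: sign flips at diagonal points (the paper uses $F_1(x,x,x,x)$ and $F_1(x,x,x,-x)$, you flip $y$ instead of $w$) force $\alpha+\beta=0$, and then a one-variable restriction through a zero of $O_{(4,1)}$ (the paper uses $F_1(z,z,z,w)=z^4(z^2-w^2)\left[a(z^2-w^2)+\alpha zw\right]$, you use the slice $x=z=w=1$) yields $\alpha=0$ by the same simple-zero sign-change argument. One small slip: your display for $F_1(1,y,1,1)$ drops the coefficient $a$; it should read $a(y^2-1)^2+\alpha y(y^2-1)=(y^2-1)\left[a(y^2-1)+\alpha y\right]$, but since the second factor still equals $\alpha$ at $y=1$, your conclusion is unaffected.
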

\begin{proof}
Because $F_1(x,x,x,x)=(\alpha+\beta)x^8$ and 
$F_1(x,x,x,-x)=-(\alpha+\beta)x^8$, we see that $\alpha+\beta=0$. 
Then, $F_1(z,z,z,w)=z^4(z^2-w^2)[a(z^2-w^2)+\alpha zw] \ge 0$ implies $\alpha=0$.
\end{proof}
By the same arguement, we have the follwoing.
\begin{lemma} 
For $0\le a \le 1$, if $F_1=aO_{(4,1)}+xyzw (\alpha x^2 z^2 +\beta y^2 w^2)\in \mathcal P_{4,8}$,
then $\alpha =\beta =0$.  
\end{lemma}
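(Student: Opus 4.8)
The plan is to follow the proof of Lemma~\ref{Lem3} essentially verbatim, exploiting the nonnegativity of $F_1\in\mathcal P_{4,8}$ through two well-chosen evaluations: the first to force $\alpha+\beta=0$, and the second to then force $\alpha=0$.

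First I would evaluate $F_1$ at the two points $(x,x,x,x)$ and $(x,x,x,-x)$. Since $O_{(4,1)}$ depends only on the squares of its arguments and vanishes whenever $x^2=y^2=z^2=w^2$, it vanishes at both points, so only the perturbation $xyzw(\alpha x^2z^2+\beta y^2w^2)$ survives. This yields $F_1(x,x,x,x)=(\alpha+\beta)x^8$ and $F_1(x,x,x,-x)=-(\alpha+\beta)x^8$. As both must be $\ge 0$, we get $\alpha+\beta=0$, i.e. $\beta=-\alpha$.

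Next, with $\beta=-\alpha$ the perturbation becomes $\alpha\,xyzw(x^2z^2-y^2w^2)$, whose distinguished pairing is $(x,z)$ against $(y,w)$. I would therefore collapse $z\to x$ and $w\to y$ and examine $F_1(x,y,x,y)$. The expectation is that $O_{(4,1)}(x,y,x,y)$ reduces to the perfect square $2x^2y^2(x^2-y^2)^2$, while the perturbation becomes $\alpha\,x^2y^2(x^2-y^2)(x^2+y^2)$, so that
\[
F_1(x,y,x,y)=x^2y^2(x^2-y^2)\bigl[2a(x^2-y^2)+\alpha(x^2+y^2)\bigr].
\]
Now fix $x=1$ and let $y$ cross the value $1$: the factor $(x^2-y^2)$ changes sign, whereas on the locus $x^2=y^2$ the bracket reduces to $2\alpha x^2$. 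Hence nonnegativity of $F_1$ across $x^2=y^2$ is impossible unless $\alpha=0$, which then gives $\beta=-\alpha=0$ as well.

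The only genuine work is arithmetic: verifying that $O_{(4,1)}(x,y,x,y)$ is indeed the stated perfect square and that the perturbation contributes the stated cross term. The conceptual point one must get right is the choice of collapse $z\to x,\ w\to y$, dictated by the $(xz)$--$(yw)$ structure of the surviving perturbation; this is precisely what makes the square term of $O_{(4,1)}$ degenerate exactly on the locus $x^2=y^2$ where the cross term $\alpha(x^2+y^2)$ is nonzero, and it is the direct analogue of the substitution $(z,z,z,w)$ used in Lemma~\ref{Lem3}.
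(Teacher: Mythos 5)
Your proof is correct and is essentially the paper's own approach: the paper dismisses this lemma with ``by the same argument'' as Lemma~\ref{Lem3}, and your two steps --- the evaluations at $(x,x,x,x)$ and $(x,x,x,-x)$ forcing $\alpha+\beta=0$, then the collapse $z\to x$, $w\to y$ giving $F_1(x,y,x,y)=x^2y^2(x^2-y^2)\left[2a(x^2-y^2)+\alpha(x^2+y^2)\right]$ and a sign change across the locus $x^2=y^2$ --- are precisely the correct adaptation of Lemma~\ref{Lem3}'s substitution $(z,z,z,w)$ to the $(xz)$--$(yw)$ structure of this perturbation. The arithmetic (in particular $O_{(4,1)}(x,y,x,y)=2x^2y^2(x^2-y^2)^2$) checks out.
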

Now, we can show that $O_{(4,1)}\in \mathcal E(\mathcal P_{4,8})$.
\begin{theorem} 
The quaternary octic $O_{(4,1)}(x,y,z,w)$ is extremal, i.e., $O_{(4,1)}\in \mathcal E (\mathcal P_{4,8})$.
\end{theorem}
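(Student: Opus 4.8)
The plan is to resume from the reduction already obtained in \eqref{form_F}, where $F=aO_{(4,1)}+G$ with $0\le a\le 1$ (Lemma~\ref{lem2}) and $G=xyzw\,P$ the residual form carrying the ten coefficients $s_1,\dots,s_{10}$ of \eqref{form_G}. Recall that extremality of $O_{(4,1)}$ is equivalent to the statement that any $F\in\mathcal P_{4,8}$ with $0\le F\le O_{(4,1)}$ is a scalar multiple of $O_{(4,1)}$, so it suffices to prove that all ten $s_i$ vanish. The central device I would use is \emph{sign symmetrization}: for any subset $S$ of the variables, averaging $F$ over the sign flips of the variables in $S$ produces another form in $\mathcal P_{4,8}$ (an average of nonnegative forms is nonnegative), it again satisfies $0\le\,\cdot\,\le O_{(4,1)}$ because every monomial of $O_{(4,1)}$ is even in each variable, and it retains exactly those monomials of $G$ that are even in every variable of $S$, with their original coefficients. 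I would organize the whole argument around the cyclic substitution $\tau\colon(x,y,z,w)\mapsto(y,z,w,x)$, under which $O_{(4,1)}$ is invariant; this lets me treat the $s_i$ orbit by orbit.

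First I would clear the orbit $\{s_1,s_2,s_3,s_4\}$ of the $xz$- and $yw$-terms. Averaging $F$ over $x\mapsto\pm x$ and $z\mapsto\pm z$ annihilates every monomial of $G$ except the $s_1,s_2$ ones and leaves precisely $aO_{(4,1)}+x^2yz^2w(s_1y^2+s_2w^2)$, which is exactly the hypothesis of Lemma~\ref{Lem3}; hence $s_1=s_2=0$. Averaging instead over $y\mapsto\pm y$ and $w\mapsto\pm w$ isolates $aO_{(4,1)}+xy^2zw^2(s_3x^2+s_4z^2)$, the image of the previous form under $\tau$; since $O_{(4,1)}$ is $\tau$-invariant, the proof of Lemma~\ref{Lem3} applies verbatim after relabeling and gives $s_3=s_4=0$.

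Next I would clear the orbit $\{s_7,s_8,s_9,s_{10}\}$. The key point is that averaging over an \emph{adjacent} pair of variables isolates a single term: averaging over $x\mapsto\pm x$ and $y\mapsto\pm y$ kills all of $G$ except the $s_7$ term and yields the nonnegative form $aO_{(4,1)}+s_7\,x^2y^2z^3w$. Evaluating it at $(x,x,x,x)$ and at $(x,x,x,-x)$, where $O_{(4,1)}$ vanishes, gives $s_7x^8\ge0$ and $-s_7x^8\ge0$, so $s_7=0$; the remaining three coefficients follow by the same one-term isolation over the adjacent pairs $(y,z)$, $(z,w)$, $(w,x)$, or equivalently by cyclic invariance under $\tau$.

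It remains to handle the orbit $\{s_5,s_6\}$ of the fully ``odd'' terms $x^3yz^3w$ and $xy^3zw^3$, and this is the step I expect to be the real obstacle: these monomials are odd in all four variables, so no positivity-preserving sign symmetrization can isolate them, and the naive sign-alternating average that would extract them is a difference of positive forms and need not be positive. The way around this is to defer them to the very end: once $s_1,\dots,s_4$ and $s_7,\dots,s_{10}$ have been shown to vanish, the identity \eqref{form_F} collapses to $F=aO_{(4,1)}+xyzw(s_5x^2z^2+s_6y^2w^2)$, which is precisely the hypothesis of the Lemma immediately following Lemma~\ref{Lem3}; that Lemma then forces $s_5=s_6=0$. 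With all ten coefficients zero we conclude $F=aO_{(4,1)}$, establishing $O_{(4,1)}\in\mathcal E(\mathcal P_{4,8})$.
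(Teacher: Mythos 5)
Your proposal is correct and takes essentially the same route as the paper's proof: the same reduction to \eqref{form_F}, the same sign-symmetrization device (valid because $O_{(4,1)}$ is even in each variable) to isolate pairs of coefficients, and the same two lemmas, Lemma~\ref{Lem3} and its unnumbered companion, to force them to vanish. The only difference is that you spell out the cases the paper compresses into ``in a similar way,'' including the genuinely important observation that $s_5,s_6$ cannot be isolated by any sign average (their monomials are odd in all four variables) and must therefore be handled last by the second lemma --- which is precisely the role that lemma plays in the paper.
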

\begin{proof}
Suppose $0\le F\le O_{(4,1)}$ and define a form
\[
F_{yw}(x,y,z,w)=\frac14 
\sum_{\epsilon_i \in \{ 1,-1 \}} 
F(\epsilon_1 x, y,\epsilon_2 z, w).
\]
Then, from the identity~\eqref{form_F}, we see that 
\[
F_{yw}=aO_{(4,1)}+x^2yz^2w(s_1 y^2+ s_2 w^2) \quad {\rm and }\quad 0\le F_{yw} \le O_{(4,1)}. 
\]
Then by Lemma~\ref{Lem3}, 
$s_1=s_2=0$. In a similar way, we can show 
$s_i=0\ (1\le i \le 10)$. So we have $F=aO_{(4,1)}$. This completes the proof.
\end{proof}

Now, we  define  quaternary octic form  $O^{\prime}_{(4,1)}$  and senary quartic form $Q_{(4,1)}$ by 
\[
\begin{aligned}
O_{(4,1)}^{\prime}(x,y,z,w)&=w^8+x^4 y^2 z^2 +y^4 x^2 z^2 +z^4 x^2 y^2 -4 x^2 y^2 z^2 w^2,\\
Q_{(4,1)}(p,q,s,t,u,v)&=B_{\Phi^{(4,1)}}\left(\begin{array}{cccc} p & s & u & v\\q & t & v & u\end{array}\right)\\
                    &=v^4+2(p^2 q^2+s^2 t^2 +u^2 v^2)+q^2 s^2+t^2 u^2+p^2 u^2\\
&\quad \quad \quad -2pqst-4pquv-4stuv.
\end{aligned}
\]
Since $\Phi^{(4,1)}$ is a positive linear map, we see that $Q_{(4,1)}$ is a psd form, that is, $Q_{(4,1)}\in \mathcal P_{6,4}$. From the arithmetic-geometric inequality, we can show that  $O^{\prime}_{(4,1)}\in \mathcal P_{4,8}$.
We also note that 
\begin{eqnarray}
O'_{(4,1)}(xz^2,xyw,zw^2,xzw)=x^4 z^6 w^6 O(x,y,z,w),\label{psd:o'}\\
Q_{(4,1)}(zw^3,xyw^2,yzw^2,xw^3,xyzw,w^4)=w^8 O^{\prime}(x,y,z,w).\label{psd:q}
\end{eqnarray}

Now, we show that $O^{\prime}_{(4,1)}\in \mathcal P_{4,8}$ is extremal psd. Suppose $O^{\prime}_{(4,1)}\ge F\in \mathcal P_{4,8}$. Then we have
\[
x^4 z^6 w^6 O_{(4,1)}(x,y,z,w)=O'_{(4,1)}(xz^2,xyw,zw^2,xzw)\ge F(xz^2,xyw,zw^2,xzw)\ge 0.
\]
Since the left-hand side is extremal, we must have 
\begin{equation}\label{ext_op}
F(xz^2,xyw,zw^2,xzw)=\alpha O'_{(4,1)}(xz^2,xyw,zw^2,xzw)
\end{equation}
for some $\alpha \in \mathbb R$. We replace $x,y,z$ and $w$ by 
\[
\left( \frac{w^4}{z^2 x}\right)^{\frac 13},\ 
\left( \frac{x^2 y^3 z}{w^5}\right)^{\frac 13},\ 
\left( \frac{x^2 z}{w^2}\right)^{\frac 13},\ {\rm and }\
\left( \frac{z w}{x}\right)^{\frac 13},
\]
respectively. Then Eq.~\eqref{ext_op} becomes $F(x,y,z,w)=\alpha O^{\prime}_{(4,1)}(x,y,z,w)$.
This completes the following Proposition.
\begin{proposition}
The quaternary octic $O^{\prime}_{(4,1)}(x,y,z,w)$ is extremal, i.e., $O^{\prime}_{(4,1)}\in \mathcal E (\mathcal P_{4,8})$.
\end{proposition}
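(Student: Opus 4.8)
The plan is to avoid repeating the monomial-by-monomial analysis used for $O_{(4,1)}$ and instead to transport the extremality of $O_{(4,1)}$, already established above, across the monomial substitution recorded in~\eqref{psd:o'}. Concretely, I would begin with an arbitrary $F\in\mathcal P_{4,8}$ satisfying $0\le F\le O'_{(4,1)}$ and aim to prove $F=\alpha\,O'_{(4,1)}$ for some $\alpha\ge 0$, which is exactly the assertion that $O'_{(4,1)}\in\mathcal E(\mathcal P_{4,8})$.

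First I would apply the monomial substitution $T:(x,y,z,w)\mapsto(xz^2,\,xyw,\,zw^2,\,xzw)$ to the chain $0\le F\le O'_{(4,1)}$. Since $T$ replaces the variables by monomials, both $F\circ T$ and $O'_{(4,1)}\circ T$ are again psd forms, and~\eqref{psd:o'} rewrites the upper bound as $0\le F\circ T\le x^4z^6w^6\,O_{(4,1)}$. The crux is to show that this squeezes $F\circ T$ to a scalar multiple, i.e. $F\circ T=\alpha\,x^4z^6w^6\,O_{(4,1)}=\alpha\,(O'_{(4,1)}\circ T)$. Here I would exploit that the prefactor is the perfect square $x^4z^6w^6=(x^2z^3w^3)^2$: any psd $G$ with $0\le G\le (x^2z^3w^3)^2\,O_{(4,1)}$ must vanish along $x=0$, $z=0$, $w=0$ to the even orders $4,6,6$ dictated by the bound (a nonnegative form bounded above by something vanishing to even order $2m$ along a coordinate hyperplane must itself vanish there to order at least $2m$, lest its leading term change sign or overshoot the bound). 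Hence $x^4z^6w^6\mid G$, and $G/(x^2z^3w^3)^2$ is a psd form trapped between $0$ and the extremal $O_{(4,1)}$, so it is a scalar multiple of $O_{(4,1)}$. Applying this with $G=F\circ T$ yields the desired identity.

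Finally I would undo the substitution using the rational back-substitution $x\mapsto(w^4/(z^2x))^{1/3}$, $y\mapsto(x^2y^3z/w^5)^{1/3}$, $z\mapsto(x^2z/w^2)^{1/3}$, $w\mapsto(zw/x)^{1/3}$, which inverts $T$ on the open dense set where all coordinates are nonzero; the cube roots are exactly what one expects, since the exponent matrix of $T$ has determinant $3$, so that $T$ is a genuine automorphism of the positive torus with inverse built from $A^{-1}$. This converts $F\circ T=\alpha\,(O'_{(4,1)}\circ T)$ into $F=\alpha\,O'_{(4,1)}$ on that open set, and since both sides are polynomials the identity holds everywhere, giving extremality.

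The step I expect to be the main obstacle is the extremality-transport in the middle paragraph. Because $T$ is not a polynomial isomorphism, extremality does not descend for free, and one must genuinely verify the divisibility $x^4z^6w^6\mid(F\circ T)$ from the order-of-vanishing estimate before reducing to the already-proven extremality of $O_{(4,1)}$. A routine but necessary secondary check is that the cube-root back-substitution really inverts $T$ (guaranteed by $\det A=3\neq 0$), so that no spurious monomial factors survive when passing from $F\circ T$ back to $F$.
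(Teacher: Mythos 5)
Your proposal is correct and takes essentially the same route as the paper: substitute $(x,y,z,w)\mapsto(xz^2,\,xyw,\,zw^2,\,xzw)$ to reduce the problem to the already-proved extremality of $O_{(4,1)}$, then invert with exactly the same cube-root back-substitution. The only difference is that you spell out the divisibility step ($x^4z^6w^6$ divides $F\circ T$ by the order-of-vanishing argument, so that $(F\circ T)/(x^2z^3w^3)^2$ is a psd form squeezed under the extremal $O_{(4,1)}$), which the paper compresses into the unproved assertion ``since the left-hand side is extremal''---a worthwhile clarification, but not a different proof.
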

We proceed to examime the extremality of $Q_{(4,1)}\in \mathcal P_{6,4}$. 
From the Eq.~\eqref{psd:q} and the extremality of $O^{\prime}_{(4,1)}$, it follows that whenever $Q_{(4,1)}\ge F\in \mathcal P_{6,4}$, we have 
\begin{equation}\label{ext_q}
F(zw^3,xyw^2,yzw^2,xw^3,xyzw,w^4)=\alpha Q_{(4,1)}(zw^3,xyw^2,yzw^2,xw^3,xyzw,w^4)
\end{equation}
for some $\alpha\in \mathbb R$. Replacing $x,y,z$ and $w$ by 
\[
x=t v^{-3/4},\quad y=q v^{1/4}t^{-1},\quad z=p v^{-3/4},\quad w=v^{1/4}
\]
respectively, Eq.~\eqref{ext_q} becomes 
\begin{equation}\label{ext_qp}
F(p,q,pq/t,t,pq/v,v)=\alpha Q_{(4,1)}(p,q,pq/t,t,pq/v,v).
\end{equation}
We consider $G:=F-\alpha Q_{(4,1)}$. Since $G(p,q,pq/t,t,pq/v,v)=0$, we see that $G$ is of the form
\[
G(p,q,s,t,u,v)=(pq-st)G_1(p,q,s,t,u,v)+(pq-uv)G_2(p,q,s,t,u,v).
\]
Using the equality $Q_{(4,1)}(p,q,s,t,u,v)=Q_{(4,1)}(t,s,q,p,u,v)$ and looking at the leading coefficient of each variable, we can get
\[
\begin{aligned}
G=&a_1(stv^2-2uv^3+pqv^2)+a_2(-uv^2 q-uv^2 s+pqsv+stqv)\\
   &+a_3(-2u^2v^2+stuv+pquv)+a_4(pqst-u^2v^2)+a_5(-2qsuv+pq^2s+s^2tq)\\
  &+a_6(p^2qu-pu^2v-tu^2v+st^2u)+a_7(p^2q^2+s^2t^2-stuv-pquv).
\end{aligned}
\]
Note that $Q_{(4,1)}(p,q,s,t,u,v)=0$ on the set $S=\{(p,q,s,t,u,v)|pq=st,\,pq=uv,\,st=uv\}$. Thus $F=0$ on the set $S$ and so $F$ has local minima on the set S. From $\partial F/\partial p=\partial (G+\alpha Q_{(4,1)})/\partial p=0$ on the set $S$, we get $a_1=a_2=a_5=a_6=0$ and $a_3=-a_4-a_7$. Compute $F=\alpha G_{(4,1)}$ when $u=1/v$, $t=v^3$, $s=1/v^3$ and $p=q=0$, then we get $a_7=0$. 

Therefore, we have that 
\[ 
0\le F=a_4(st-uv)(pq-uv)+\alpha Q_{(4,1)}\le Q_{(4,1)}\quad {\rm with }\ 0\le \alpha \le 1.
\]
Now, we will show that $a_7=0$ from the condition $F\ge 0$. When $t=v^2$ and $u=1$, the discriminant $D(F,p)$ of $F$ on the variable $p$ should be less than or equal to $0$. That is, $-D(F,p)\ge 0$. From the identity
\[
\begin{aligned}
-D(F,p)=(&4\alpha^2 q^2+8\alpha^2 q^4+8\alpha^2 v^4 -a_7^2 q^2 v^4+4a_7 \alpha q^2 v^4+12\alpha^2 q^2v^4)s^2\\
-&(4a_7\alpha v^3+16\alpha^2 v^3-2a_7^2 q^2 v^3+4a_7\alpha q^2 v^3+48\alpha^2 q^2 v^3)s\\
&+(4a_7\alpha v^2+8\alpha^2v^2-a_7^2 q^2 v^2+8\alpha^2v^4+16\alpha^2 q^2 v^4),
\end{aligned}
\]
we compute the condition on which the discriminant $D(-D(F,p),s)$ should be less than or equal to $0$.
Since the coefficient of the highest degree of $q$, $q^6$, is $32\alpha^2 v^2 (a_7^2-16\alpha^2v^2)$ in $D(-D(F,p),s)$. it follows that  $32\alpha^2 v^2 (a_7^2-16\alpha^2v^2)\le 0$ for all $v\neq 0$. Consequently, we have $a_7=0$. This completes the proof of the following Proposition.
\begin{proposition}\label{psd_form}
The senary quartic $Q_{(4,1)}(p,q,s,t,u,v)$ is extremal, i.e., $Q_{(4,1)}\in \mathcal E (\mathcal P_{6,4})$.
\end{proposition}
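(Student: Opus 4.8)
The plan is to deduce the extremality of $Q_{(4,1)}$ from that of $O'_{(4,1)}$, established in the preceding Proposition, by means of the substitution identity \eqref{psd:q}. Suppose $0\le F\le Q_{(4,1)}$ with $F\in\mathcal P_{6,4}$. Feeding the six monomials of \eqref{psd:q} into both sides bounds the pullback of $F$ above by $w^8 O'_{(4,1)}(x,y,z,w)$; mirroring the passage from $O_{(4,1)}$ to $O'_{(4,1)}$ just carried out, the extremality of $O'_{(4,1)}$ then forces the pullback of $F$ to equal a scalar multiple of the pullback of $Q_{(4,1)}$, so that $F=\alpha Q_{(4,1)}$ holds identically along this substitution for some $\alpha\in[0,1]$.

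The difficulty is that this monomial map covers only the subvariety $\{pq=st,\ pq=uv\}$ of $\mathbb R^6$, so the identity just obtained constrains $F$ solely on that locus. I would undo the map by a rational reparametrization (of the shape $x=tv^{-3/4}$, and so on) to record the constraint as $F(p,q,pq/t,t,pq/v,v)=\alpha Q_{(4,1)}(p,q,pq/t,t,pq/v,v)$. Writing $G:=F-\alpha Q_{(4,1)}$, this says $G$ vanishes whenever $pq=st$ and $pq=uv$, which forces a factorization $G=(pq-st)G_1+(pq-uv)G_2$ with $G_1,G_2$ quadratic forms. What remains is to prove $G\equiv 0$.

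To pare down the coefficients still hidden in $G$, I would invoke the symmetry $Q_{(4,1)}(p,q,s,t,u,v)=Q_{(4,1)}(t,s,q,p,u,v)$, inherited by $F$, and match leading coefficients variable by variable; this collapses $G$ to an explicit combination $\sum_{i=1}^{7}a_i g_i$ of seven quartics. Positivity now enters through the zero set $S=\{pq=st=uv\}$ of $Q_{(4,1)}$. On $S$ one has $F=0$, so every point of $S$ is an interior global minimum of the psd form $F$ and all first partials of $F$ vanish there; imposing $\partial F/\partial p=0$ on $S$, and testing at one convenient point, kills all but one coefficient and reduces the situation to $F=a_4(pq-uv)(st-uv)+\alpha Q_{(4,1)}$ with $0\le\alpha\le 1$.

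The crux is this surviving coefficient $a_4$, to which the first-order vanishing conditions are blind; only a genuine second-order positivity argument reaches it. Here I would freeze the remaining freedom (say $t=v^2,\ u=1$) and view $F$ as a quadratic in $p$, so that $F\ge 0$ demands $D(F,p)\le 0$; treating $-D(F,p)$ as a quadratic in $s$ and imposing its discriminant condition yields a polynomial inequality in $q$ and $v$ whose top coefficient in $q$ carries a factor proportional to $a_4^2-16\alpha^2 v^2$. This can remain nonpositive for all $v$ only if $a_4=0$. The delicate part is steering the two nested eliminations cleanly enough that one leading-coefficient sign delivers the verdict; once $a_4=0$, we have $F=\alpha Q_{(4,1)}$, and hence $Q_{(4,1)}\in\mathcal E(\mathcal P_{6,4})$.
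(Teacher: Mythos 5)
Your proposal follows essentially the same route as the paper's own proof: pull back along the substitution \eqref{psd:q} to invoke the extremality of $O^{\prime}_{(4,1)}$, reparametrize to get $F=\alpha Q_{(4,1)}$ on the locus $\{pq=st=uv\}$, factor $G=F-\alpha Q_{(4,1)}$ as $(pq-st)G_1+(pq-uv)G_2$, cut the coefficients down using the symmetry $Q_{(4,1)}(p,q,s,t,u,v)=Q_{(4,1)}(t,s,q,p,u,v)$ and the vanishing of first partials on that locus, and eliminate the last surviving coefficient by the nested discriminant argument at $t=v^2$, $u=1$. The only difference is notational (you call the surviving coefficient $a_4$, which is the correct reading of what the paper, apparently by typo, labels $a_7$ in its discriminant computation), so your proposal matches the paper's proof.
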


\section{Extrmality for Qi and Hou's map}
In this section, we show that $\Phi^{(n,k)}$ is not extremal whenever $n$ and $k$ are not relatively prime, that is, $\gcd(n,k)>1$. 
This answers the question on the existence of optimal EW without extremality as well as spanning property.
For the case of $\gcd(n,k)=1$, we think that $\Phi^{(n,k)}$ may be extremal. Although our proof can be applicable for general case, it is too laborious. So we give the details of the proof for extremality of $\Phi^{(4,1)}$ and $\Phi^{(4,3)}$.

We begin with showing that $B_{\Phi^{(n,k)}}$ is not extremal whenever $\gcd(n,k)>1$.
Let $S_n$ be the symmetric group consisting of all bijection (permutation) from the set $\{1,2,\cdots,n\}$ onto itself.
For any integer $q$, define $\sigma_q \in S_n$ by $$\sigma_q(j)\equiv j+q \pmod{n} .$$ 
First, we consider the case when $k$ divides $n$ and $({n}/{k}) >1$. 
Note that $\sigma_k$ is a product of disjoint $k$ cycles. 
We recall  the  biquadratic form $B_{\Phi^{(n,k)}}$ 
\begin{equation}\label{dec_biquad1}
B_{\Phi^{(n,k)}}=(n-2)\sum_{i=1}^n x_i^2 y_i^2 -2 \sum_{1\le i<j\le n} x_iy_ix_jy_j
+\sum_{i=1}^n x_{\sigma_k(i)}^2y_i^2,
\end{equation}
and define biquadratic forms 
\begin{equation}\label{decomp}
F_{\sigma_k,d}=\left ( \frac{n}{k}-2 \right)\sum_{i \equiv d \hskip -5pt \pmod{k}} x_i^2 y_i^2 
-2 \sum_{\substack{1\le i<j\le n \\ i \equiv j\equiv d \hskip -5pt \pmod{k} }} x_iy_ix_jy_j
+\sum_{i \equiv d \hskip -5pt \pmod{k}} x_{\sigma_k(i)}^2y_i^2 
\end{equation}
for each $d=1,\cdots,k$. Then we can easily check that 
\begin{equation}\label{dec_biquad2}
B_{\Phi^{(n,k)}}=\sum_{d=1}^{k} F_{\sigma_k,d}+
\sum_{\substack{1\le i<j\le n \\ i \not\equiv j \hskip -5pt \pmod{k} }} 
\left( x_i y_i -x_j y_j \right)^2 .
\end{equation}
Now, we see that all the biquadratic forms  $F_{\sigma_m ,d}$'s in~\eqref{dec_biquad2} are equivalent to the biquadratic form $B_{\Phi^{(n/k,1)}}$. That is, by renaming $x_{d+i\,m}$ by $x_{i+1}$ and $y_{d+i\,m}$ by $y_{i+1}$ in the biquadratic form $F_{\sigma_k , d}$, we get the $B_{\Phi^{(n/k,1)}}$
\begin{equation}\label{dec_biquad3}
B_{\Phi^{(n/k,1)}} = \left(\frac n k-2\right)\sum_{i=1}^{n/k} x_i^2 y_i^2 -2 \sum_{1\le i<j\le n/k} x_iy_ix_jy_j
+\sum_{i=1}^{n/k} x_i^2y_{\sigma_1(i)}^2,
\end{equation}
where $\sigma_1$ is a permutation in $S_{n/k}$ defined by $\sigma_1(j)=j+1$.
Thus $F_{\phi_m,d}$'s are equivalent to $B_{\Phi^{(n/k,1)}}$. Furthermore, we can conclude that each $F_{\sigma_k , d}$ in~\eqref{dec_biquad2} is positive semidefinite quadratic form since $B_{\Phi^{(n/k,1)}}$ is psd. Consequently, we have the following result.
\begin{proposition} \label{prop:divide}
If $k$ divide $n$, then psd biquadractic form  $B_{\Phi^{(n,k)}}$ is decomposed as a sum of psd biquadratic forms as in the identity~\eqref{dec_biquad2}. Futhermore, each $F_{\sigma_k, d}$ in~\eqref{dec_biquad2} can be considered as $B_{\Phi^{(n/k,1)}}$ by renaming.
\end{proposition}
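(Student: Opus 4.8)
The plan is to establish the displayed decomposition \eqref{dec_biquad2} as an exact polynomial identity and then to argue that every summand on its right-hand side is positive semidefinite. The squares $(x_iy_i-x_jy_j)^2$ are manifestly psd, so the whole psd claim reduces to the $F_{\sigma_k,d}$; and the second assertion of the proposition---that each $F_{\sigma_k,d}$ is a relabelled copy of $B_{\Phi^{(n/k,1)}}$---will supply precisely that psd property while finishing the proof. Thus the two halves of the statement are proved together.

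First I would verify \eqref{dec_biquad2} by expanding its right-hand side and sorting the monomials into the three types that make up $B_{\Phi^{(n,k)}}$ in \eqref{dec_biquad1}: the diagonal terms $x_i^2y_i^2$, the cross terms $x_iy_ix_jy_j$ with $i<j$, and the permutation terms $x_{\sigma_k(i)}^2y_i^2$. The cross terms match immediately: the $F_{\sigma_k,d}$ supply exactly the pairs $i<j$ with $i\equiv j\pmod k$, while the squares $(x_iy_i-x_jy_j)^2$ supply exactly the complementary pairs $i\not\equiv j\pmod k$, so together they reconstitute $-2\sum_{1\le i<j\le n}x_iy_ix_jy_j$. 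The permutation terms occur only inside the $F_{\sigma_k,d}$, and as $d$ runs over $1,\dots,k$ the residue classes partition $\{1,\dots,n\}$, so these assemble to $\sum_{i=1}^n x_{\sigma_k(i)}^2y_i^2$. The only step deserving genuine care is the coefficient of each $x_\ell^2y_\ell^2$: inside $\sum_d F_{\sigma_k,d}$ it equals $n/k-2$, whereas in $\sum_{i\not\equiv j}(x_iy_i-x_jy_j)^2$ the monomial $x_\ell^2y_\ell^2$ is produced once for every index $m\ne\ell$ with $m\not\equiv\ell\pmod k$; since exactly $n/k$ indices (including $\ell$) are congruent to $\ell$, there are $n-n/k$ such $m$, and $(n/k-2)+(n-n/k)=n-2$ recovers the diagonal coefficient of $B_{\Phi^{(n,k)}}$. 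This diagonal bookkeeping is the one place an off-by-a-count error could creep in, and it is really the only nontrivial part of the identity.

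For the psd claim I would exploit that $k\mid n$ forces $\sigma_k$ to split into $k$ disjoint cycles, each being a residue class $\{d,d+k,\dots,d+(n/k-1)k\}$ traversed as a single $(n/k)$-cycle. Relabelling this class in the order $d,d+k,\dots,d+(n/k-1)k$ as $1,2,\dots,n/k$, the restriction of $\sigma_k$ becomes the shift $p\mapsto p+1\pmod{n/k}$---the wrap-around $d+(n/k-1)k\mapsto d$ being the one case to check explicitly---and under this renaming $F_{\sigma_k,d}$ is carried term by term onto $B_{\Phi^{(n/k,1)}}$ as defined in \eqref{biquad_form} (displayed in \eqref{dec_biquad3}). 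Because $\Phi^{(n/k,1)}$ is a positive linear map whenever $n/k\ge 3$, its associated form $B_{\Phi^{(n/k,1)}}$ is psd; and in the remaining case $n/k=2$ one sees directly that $B_{\Phi^{(2,1)}}=(x_2y_1-x_1y_2)^2$ is a square (which is also the promised reproof of decomposability of $\Phi^{(n,n/2)}$). Either way each $F_{\sigma_k,d}$ is psd, and combined with the identity this exhibits $B_{\Phi^{(n,k)}}$ as a sum of psd forms, exactly as asserted. The main obstacle here is organizational rather than conceptual: keeping the index arithmetic exact, namely the diagonal coefficient count above and the cyclic wrap-around of $\sigma_k$ on each residue class.
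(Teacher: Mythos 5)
Your proof is correct and takes essentially the same route as the paper: verify the polynomial identity \eqref{dec_biquad2} by monomial bookkeeping (the diagonal count $(n/k-2)+(n-n/k)=n-2$ being the only nontrivial check), then relabel each residue class so that $F_{\sigma_k,d}$ becomes $B_{\Phi^{(n/k,1)}}$, whose positive semidefiniteness yields the psd claim. Your explicit handling of the degenerate case $n/k=2$, where the form is the perfect square $(x_2y_1-x_1y_2)^2$, is a small point the paper only addresses implicitly in its later corollary, but it is the same argument.
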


Now, we assume $\gcd(n,q)=k \ge1$. Then we write $q=k m$ such that $\gcd(n/k,m)=1$. 
Define $\mu \in S_n$ by $\mu(d+k\,j) \equiv d+q j \equiv d+k\,m\,j \pmod{n}$. 
Then it is easy to check $\mu$ is well-defined and 
$\mu \circ \sigma_k = \sigma_q \circ \mu$. To represent renaming, we define $\mu(B_{\Phi^{(n,k)}})$ by 
$$
\mu(B_{\Phi^{(n,k)}})=(n-2)\sum_i x_{\mu(i)}^2 y_{\mu(i)}^2 -
2 \sum_{1\le i<j\le n} x_{\mu(i)}y_{\mu(i)}x_{\mu(j)}
y_{\mu(j)} +\sum_i x_{\mu(\sigma_k(i))}^2y_{\mu(i)}^2 .  
$$
Then we see that
\[
\mu(B_{\Phi^{(n,k)}})=(n-2)\sum_i x_i^2 y_i^2 -2 \sum_{1\le i<j\le n} x_iy_ix_jy_j
+\sum_i x_{\sigma_q(i)}^2y_i^2 
=B_{\Phi^{(n,q)}}. 
\]
Therefore,  we have the following Proposition.
\begin{proposition}\label{prop:gcd}
Let $\gcd(n,q)=k\ge 1$. Then, $B_{\Phi^{(n,q)}}$ is extremal if and only if $B_{\Phi^{(n,k)}}$ is extremal.
\end{proposition}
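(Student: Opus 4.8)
The plan is to exploit the already-established identity $\mu(B_{\Phi^{(n,k)}}) = B_{\Phi^{(n,q)}}$ together with the observation that the relabeling operation $B \mapsto \mu(B)$ is nothing more than an invertible linear change of variables on biquadratic forms. Concretely, the permutation $\mu \in S_n$ induces the linear map $T_\mu$ that sends $x_i \mapsto x_{\mu(i)}$ and $y_i \mapsto y_{\mu(i)}$ simultaneously; this is an invertible transformation of the space of biquadratic forms in the $2n$ variables, with inverse $T_{\mu^{-1}}$ coming from the inverse permutation.

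First I would check that $T_\mu$ maps the cone $\mathcal P_{2n,4}$ onto itself. Indeed, for any form $F$ the value of $\mu(F)$ at a point $(x,y)$ equals the value of $F$ evaluated at the point obtained by permuting the coordinates of $(x,y)$ according to $\mu$. Hence $F$ is positive semidefinite if and only if $\mu(F)$ is, so $T_\mu$ restricts to a bijection of $\mathcal P_{2n,4}$ onto itself whose inverse is $T_{\mu^{-1}}$.

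Next I would invoke the elementary fact that a linear automorphism of a convex cone carries extremal elements to extremal elements. Since $T_\mu$ is linear and invertible, a decomposition $B_{\Phi^{(n,k)}} = F_1 + F_2$ with $F_i \in \mathcal P_{2n,4}$ corresponds bijectively, via $T_\mu$, to the decomposition $B_{\Phi^{(n,q)}} = \mu(F_1) + \mu(F_2)$ with $\mu(F_i) \in \mathcal P_{2n,4}$, and the relation $F_i = \lambda_i B_{\Phi^{(n,k)}}$ holds exactly when $\mu(F_i) = \lambda_i B_{\Phi^{(n,q)}}$. Therefore $B_{\Phi^{(n,k)}}$ is extremal in $\mathcal P_{2n,4}$ if and only if $B_{\Phi^{(n,q)}} = \mu(B_{\Phi^{(n,k)}})$ is, which is the assertion of the Proposition.

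There is no serious obstacle here: the genuine content, namely producing the permutation $\mu$ with $\mu \circ \sigma_k = \sigma_q \circ \mu$ and verifying $\mu(B_{\Phi^{(n,k)}}) = B_{\Phi^{(n,q)}}$, has already been carried out in the lines preceding the statement. The only point demanding a line of care is confirming that $T_\mu$ and its inverse both preserve positive semidefiniteness, and this is immediate because permuting variables does not alter the set of values attained by a form. Granting that, extremality is transported across the relabeling automatically.
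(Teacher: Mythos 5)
Your proposal is correct and follows essentially the same route as the paper, which establishes the identity $\mu(B_{\Phi^{(n,k)}})=B_{\Phi^{(n,q)}}$ via the permutation $\mu$ satisfying $\mu\circ\sigma_k=\sigma_q\circ\mu$ and then concludes the proposition directly. Your only addition is to make explicit the step the paper leaves implicit, namely that the relabeling $T_\mu$ is a linear automorphism of the cone $\mathcal P_{2n,4}$ and therefore transports extremal elements to extremal elements in both directions.
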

Finally, we can show that the main result.
\begin{theorem}\label{thm:main}
If $\gcd(n,k)\neq 1$, then $\Phi^{(n,k)}$ is not extremal positive linear map.
\end{theorem}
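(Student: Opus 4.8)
The plan is to reduce the general case to a divisibility case that is already understood and then exhibit an explicit nontrivial splitting of the associated biquadratic form. Write $g=\gcd(n,k)$. Since $k\in\{1,\dots,n-1\}$ we have $2\le g\le k<n$, so $g$ is a proper divisor of $n$ with $n/g\ge 2$; in fact $n\ge 2g\ge 4$. By Proposition~\ref{prop:gcd} (applied with $q=k$, so that its divisor is $\gcd(n,k)=g$), $B_{\Phi^{(n,k)}}$ is extremal in $\mathcal P_{2n,4}$ if and only if $B_{\Phi^{(n,g)}}$ is. Hence it suffices to prove that $B_{\Phi^{(n,g)}}$ is \emph{not} extremal, and then to pass from the form back to the map.

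First I would apply Proposition~\ref{prop:divide} with its divisor taken to be $g$. Because $g\mid n$ and $n/g\ge 2$, this yields the decomposition~\eqref{dec_biquad2} of $B_{\Phi^{(n,g)}}$ into a sum of psd biquadratic forms, namely $\sum_{d=1}^{g}F_{\sigma_g,d}$ together with the squares $(x_iy_i-x_jy_j)^2$ over pairs with $i\not\equiv j\pmod g$. The key step is then to isolate one psd summand that is manifestly not proportional to the whole form. Since $g\ge 2$ forces $1\not\equiv 2\pmod g$, the square $B_1:=(x_1y_1-x_2y_2)^2$ occurs among the summands; set $B_2:=B_{\Phi^{(n,g)}}-B_1$, which is the sum of the remaining psd summands and hence again psd. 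The coefficient of $x_i^2y_i^2$ in $B_{\Phi^{(n,g)}}$ equals $n-2\ge 2$ for every $i$ (the shift term contributes only to monomials $x_{\sigma_g(i)}^2y_i^2$ with $\sigma_g(i)\ne i$), so $B_{\Phi^{(n,g)}}$ contains $x_3^2y_3^2$, which $B_1$ does not since $B_1$ involves only the indices $1$ and $2$. Comparing this coefficient shows $B_1$ cannot be a scalar multiple of $B_{\Phi^{(n,g)}}$, so $B_{\Phi^{(n,g)}}=B_1+B_2$ is a genuinely nontrivial decomposition into psd forms and $B_{\Phi^{(n,g)}}\notin\mathcal E(\mathcal P_{2n,4})$.

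By Proposition~\ref{prop:gcd} it follows that $B_{\Phi^{(n,k)}}$ is not extremal either, so there is a nontrivial splitting $B_{\Phi^{(n,k)}}=\tilde B_1+\tilde B_2$ into psd biquadratic forms with $\tilde B_1$ not proportional to $B_{\Phi^{(n,k)}}$ (concretely obtained by transporting $B_1,B_2$ through the renaming $\mu$ of Proposition~\ref{prop:gcd}, which preserves both positive semidefiniteness and proportionality). Finally I would invoke the correspondence between psd biquadratic forms and positive linear maps recorded in Section~\ref{map_form}: Choi's construction turns each psd summand $\tilde B_i$ into a positive linear map $\phi_i$ with $B_{\phi_i}=\tilde B_i$, and since $\phi\mapsto B_\phi$ is linear we obtain $\Phi^{(n,k)}=\phi_1+\phi_2$ with $\phi_1$ not proportional to $\Phi^{(n,k)}$. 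Hence $\Phi^{(n,k)}$ is not extremal. I note that the same uniform argument covers $k=n/2$ (where $g=n/2$ and every $F_{\sigma_g,d}$ is itself a square), consistent with the known decomposability of $\Phi^{(n,n/2)}$.

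The main obstacle is logical rather than computational: the preliminaries only record the implication ``$B_\phi$ extremal $\Rightarrow\phi$ extremal,'' so I must argue non-extremality in the direction that is genuinely justified, namely by producing the map-level decomposition directly from the form-level one through Choi's form-to-map construction (exactly the tool flagged in Section~\ref{map_form}), rather than appealing to an unstated converse. The only other point requiring care is verifying that the isolated summand is truly not proportional to $B_{\Phi^{(n,g)}}$, which is precisely where the standing hypotheses $g\ge 2$ and $n\ge 4$ (guaranteeing a third index and a genuinely multivariable target form) are used.
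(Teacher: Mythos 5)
Your proposal is correct and takes essentially the same route as the paper's own proof: reduce to the divisor case via Proposition~\ref{prop:gcd}, invoke the decomposition of Proposition~\ref{prop:divide}, and convert the form-level splitting into a map-level splitting through the correspondence described at the end of Section~\ref{map_form}; your extra verifications (that the summand $(x_1y_1-x_2y_2)^2$ is not proportional to the whole form, and that the decomposition transports through the renaming $\mu$) simply make explicit details the paper leaves implicit. One caveat, shared with the paper: linearity of $\phi\mapsto B_\phi$ gives only $B_{\phi_1+\phi_2}=B_{\Phi^{(n,k)}}$ rather than $\phi_1+\phi_2=\Phi^{(n,k)}$, since this correspondence is not injective (e.g.\ $X\mapsto VXV^{\rm t}$ and $X\mapsto VX^{\rm t}V^{\rm t}$ share the same biquadratic form), so strictly one should choose the positive maps attached to the summands so that they sum exactly to $\Phi^{(n,k)}$ --- the natural choices (embedded copies of $\Phi^{(n/g,1)}$ for the $F_{\sigma_g,d}$, and $X\mapsto VXV^{\rm t}$ with $V=|i\rangle\langle i|-|j\rangle\langle j|$ for the squares) indeed do.
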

\begin{proof}
From the Proposition~\ref{prop:gcd}, it suffices to consider the case when $k$ divide $n$. In this case, we know that biquadratic form $B_{\Phi^{(n,k)}}$ is the sum of positive semidefinite biquadractic forms from the proposition~\ref{prop:divide}. Therefore the corresponding map $\Phi^{(n,k)}$ is the sum of positive linear maps as explained in the last paragraph of the section~\ref{map_form}. That is, $\Phi^{(n,k)}$ is not extremal.
\end{proof}
As a byproduct, we have the following corollaries.
\begin{corollary}There exists an optimal EW which neither has no spanning property nor is associated extremal positive linear map. In fact, $W_{\Phi^{(n,k)}}$ is such an optimal EW whenever $\gcd(n,k)\neq 1$.
\end{corollary}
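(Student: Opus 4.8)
The plan is to assemble the corollary from three properties of the single object $W_{\Phi^{(n,k)}}$, each of which is available either from the cited literature or from Theorem~\ref{thm:main}; no fresh computation is required, so the only real work is checking that the hypotheses are mutually compatible.

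First I would record the two properties coming from Qi and Hou. By \cite{xia}, for every $k \neq n/2$ the witness $W_{\Phi^{(n,k)}}$ is an optimal entanglement witness, and moreover the set $\mathcal P_{W_{\Phi^{(n,k)}}}$ of product vectors on which it vanishes fails to span $\mathbb C^n \otimes \mathbb C^n$; that is, $W_{\Phi^{(n,k)}}$ is optimal but has no spanning property. These supply the first two of the three features demanded by the statement.

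Next I would invoke Theorem~\ref{thm:main}: when $\gcd(n,k) \neq 1$, the map $\Phi^{(n,k)}$ is not extremal in the cone of positive linear maps, so its associated witness $W_{\Phi^{(n,k)}} = \tfrac1n C_{\Phi^{(n,k)}}$ is not associated with an extremal positive linear map. Since $k \neq n/2$ also guarantees indecomposability of $\Phi^{(n,k)}$, combining this with the previous paragraph shows that any pair $(n,k)$ satisfying both $\gcd(n,k) \neq 1$ and $k \neq n/2$ produces a witness that is optimal, indecomposable, lacks the spanning property, and is not associated with an extremal map---precisely the example asserted to exist.

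The one delicate point, and the main (if minor) obstacle, is the excluded value $k = n/2$: this is permitted by the bare hypothesis $\gcd(n,k) \neq 1$ yet must be ruled out, since for even $n > 2$ the witness $W_{\Phi^{(n,n/2)}}$ is decomposable and not optimal, and hence cannot serve as an example. I would therefore prove the corollary under the sharpened hypothesis $\gcd(n,k) \neq 1$ together with $k \neq n/2$, and close by observing that such pairs are plentiful---for instance $(n,k) = (6,2)$, $(6,4)$, $(8,2)$, or $(9,3)$---so that the family of witnesses is nonempty and the existence claim genuinely follows.
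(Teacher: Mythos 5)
Your proposal is correct and follows essentially the same route as the paper, which offers no explicit proof of this corollary: it is simply the assembly of Qi and Hou's results (optimality and failure of the spanning property of $W_{\Phi^{(n,k)}}$ for $k\neq n/2$, cited in Section 2) with Theorem~\ref{thm:main} (non-extremality of $\Phi^{(n,k)}$ when $\gcd(n,k)\neq 1$). Your one substantive addition is also a genuine correction: as literally stated, the corollary's hypothesis $\gcd(n,k)\neq 1$ admits $k=n/2$ (e.g.\ $(n,k)=(4,2)$), for which $W_{\Phi^{(n,n/2)}}$ is decomposable and \emph{not} optimal, so the claim fails there; the paper itself concedes this in Section 2 (``$W_{\Phi^{(n,k)}}$ is still optimal in the case of $\gcd(n,k)\neq 1$ as long as $k\neq n/2$'') but neglects to build the exclusion into the corollary. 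Your sharpened hypothesis $\gcd(n,k)\neq 1$ together with $k\neq n/2$, plus the witness pairs $(6,2)$, $(6,4)$, $(8,2)$, $(9,3)$ showing the family is nonempty, is exactly the repair needed, and your check that these pairs also satisfy Qi and Hou's indecomposability criterion ($n$ odd or $k\neq n/2$) closes the argument.
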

\begin{corollary}\label{coro:equiv}
If $\gcd(n,k)=1$, then $\Phi^{(n,k)}$ is extremal if and only if $\Phi^{(n,1)}$ is extremal
\end{corollary}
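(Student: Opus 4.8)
The plan is to obtain Corollary~\ref{coro:equiv} as the coprime specialization of Proposition~\ref{prop:gcd}, but promoted from the level of biquadratic forms to the level of the maps themselves. Setting $q=k$ in Proposition~\ref{prop:gcd}, the hypothesis $\gcd(n,k)=1$ forces the proposition's gcd parameter to equal $1$, so the renaming permutation $\mu\in S_n$ produced in its proof satisfies $\mu\circ\sigma_1=\sigma_k\circ\mu$, and the form identity recorded there reads $\mu(B_{\Phi^{(n,1)}})=B_{\Phi^{(n,k)}}$. At the level of forms this already gives that $B_{\Phi^{(n,k)}}$ is extremal in $\mathcal P_{2n,4}$ if and only if $B_{\Phi^{(n,1)}}$ is. The catch is that the correspondence recalled in Section~\ref{map_form} only guarantees that extremality of $B_\phi$ in $\mathcal P_{2n,4}$ is \emph{sufficient} for extremality of $\phi$; it is not asserted to be necessary. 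Hence one cannot simply combine Proposition~\ref{prop:gcd} with that one-directional implication to read off the desired map-level equivalence, and a slightly different argument is needed.

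To circumvent this, I would translate the coordinate renaming $\mu$ directly into a statement about the maps rather than the forms. Let $P_\mu$ be the permutation matrix with $P_\mu|i\rangle=|\mu(i)\rangle$. Writing $B_{\Phi^{(n,k)}}(x,y)=y^{\rm t}\Phi^{(n,k)}(xx^{\rm t})y$ and substituting the coordinate change encoded by $\mu$ into the identity $\mu(B_{\Phi^{(n,1)}})=B_{\Phi^{(n,k)}}$, comparison of the two quadratic forms in $y$ yields the conjugation identity
\[
\Phi^{(n,k)}(A)=P_\mu\,\Phi^{(n,1)}\!\left(P_\mu^{\rm t}A P_\mu\right)P_\mu^{\rm t},
\]
valid first on symmetric rank-one matrices $A=xx^{\rm t}$, then on all symmetric matrices by linearity, and finally on all of $M_n$ by the hermiticity-preserving extension used throughout Section~\ref{map_form} (this is exactly where the reality condition $\Phi^{(n,k)}(M_n(\mathbb R))\subseteq M_n(\mathbb R)$ is invoked).

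With this identity in hand the conclusion is immediate. The operations $A\mapsto P_\mu^{\rm t}A P_\mu$ and $B\mapsto P_\mu B P_\mu^{\rm t}$ are mutually inverse $*$-automorphisms of $M_n$; in particular each carries positive semidefinite matrices to positive semidefinite matrices, so conjugating a positive linear map as above again produces a positive linear map, and the assignment $\Phi^{(n,1)}\mapsto\Phi^{(n,k)}$ is a bijection of the cone of positive linear maps onto itself that respects sums and nonnegative scalings. Any such cone automorphism sends extremal rays to extremal rays: a nontrivial splitting $\Phi^{(n,k)}=\phi_1+\phi_2$ transports, via the inverse conjugation, to a nontrivial splitting of $\Phi^{(n,1)}$, and conversely. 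Therefore $\Phi^{(n,k)}$ is extremal if and only if $\Phi^{(n,1)}$ is, which is the claim.

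The only step I expect to require genuine care is the middle one, namely justifying the passage from the form renaming to the map conjugation and, in particular, verifying that comparing the two $y$-quadratic forms really delivers the displayed identity on \emph{all} of $M_n$ (with the transposes placed correctly), and not merely on the symmetric rank-one matrices from which it is first read off. Once the reality and hermiticity extension of Section~\ref{map_form} are used to supply that, the remaining cone-automorphism argument is entirely routine.
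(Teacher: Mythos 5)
Your strategy is correct, but it is genuinely different from the paper's. The paper never leaves the level of forms: its proof of this corollary chains Proposition~\ref{prop:gcd} (with $q=k$, so $\gcd=1$) together with a \emph{two-way} transfer between extremality of $B_{\Phi^{(n,k)}}$ and extremality of $\Phi^{(n,k)}$ --- the forward direction being the Choi--Lam criterion recalled in Section~\ref{map_form}, and the reverse direction (``non-extremality of $B_{\Phi^{(n,k)}}$ implies non-extremality of $\Phi^{(n,k)}$'') being extracted from the proof of Theorem~\ref{thm:main}. Your observation that the one-directional statement of Section~\ref{map_form} alone does not suffice is exactly the delicate point in the paper's own argument: to make the reverse direction rigorous one must reduce an arbitrary decomposition in $\mathcal P_{2n,4}$ to a decomposition into psd \emph{biquadratic} forms (e.g.\ by averaging over the sign changes $x\mapsto -x$, $y\mapsto -y$ and checking that the pure-$x$ and pure-$y$ parts vanish) before the form-to-map correspondence can be applied. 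Your route sidesteps all of this by working directly at the map level: the renaming $\mu$ becomes the conjugation $\Phi^{(n,k)}=\mathrm{Ad}_{P_\mu}\circ\Phi^{(n,1)}\circ\mathrm{Ad}_{P_\mu^{\rm t}}$, and conjugation by a fixed permutation is a linear automorphism of the cone of positive maps, hence preserves extremal rays. This is cleaner, and it is also more general: the same computation with the paper's $\mu$ for arbitrary $q$ shows $\Phi^{(n,q)}$ is permutation-conjugate to $\Phi^{(n,\gcd(n,q))}$, i.e.\ a map-level strengthening of Proposition~\ref{prop:gcd}.

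The step you flag as needing care is a real gap in your write-up, and it cannot be closed in the way you suggest: a biquadratic form does \emph{not} determine the map on all of $M_n$, since it only records $y^{\rm t}\phi(xx^{\rm t})y$; for instance the identity map and the transpose map $A\mapsto A^{\rm t}$ have the same form $(x^{\rm t}y)^2$. So equality of forms pins down only the symmetric parts of the two maps on real symmetric inputs, and no hermiticity-extension argument can recover more from that data alone. The repair, however, is easy and makes the detour through forms unnecessary: verify $P_\mu\,\Phi^{(n,1)}\!\left(P_\mu^{\rm t}AP_\mu\right)P_\mu^{\rm t}=\Phi^{(n,k)}(A)$ directly from the definition \eqref{map}, entrywise for an arbitrary $A=[a_{ij}]\in M_n$: conjugation by $P_\mu$ merely permutes entries, so the left-hand side equals ${\rm diag}(b_1,\dots,b_n)-[a_{ij}]$ with $b_i=(n-1)a_{ii}+a_{\mu(\sigma_1(\mu^{-1}(i))),\mu(\sigma_1(\mu^{-1}(i)))}$, and the intertwining relation $\mu\circ\sigma_1=\sigma_k\circ\mu$ turns this into $b_i=(n-1)a_{ii}+a_{\sigma_k(i),\sigma_k(i)}$, exactly $\Phi^{(n,k)}(A)$. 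With that substitution, your cone-automorphism argument completes a valid proof of the corollary.
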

\begin{proof}
From the proof of Theorem~\ref{thm:main}, we see not non-extremality of $B_{\Phi^{(n,k)}}$ implies the non-extremality of $\Phi^{(n,k)}$. By combining the results of Proposition~\ref{prop:divide},\ref{prop:gcd} and Theorem~\ref{thm:main}, the proof is completed.
\end{proof}
\begin{corollary} A positive linear map $\Phi^{(n,n/2)}$ is decomposable when $n$ is even natural number greater than $2$.
\end{corollary}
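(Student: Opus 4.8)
The plan is to read off decomposability directly from the biquadratic-form splitting already established in Proposition~\ref{prop:divide}, specialized to $k=n/2$. Since $n$ is even and greater than $2$, the integer $k=n/2$ divides $n$ with quotient $n/k=2$, so Proposition~\ref{prop:divide} applies verbatim and yields the identity~\eqref{dec_biquad2} expressing $B_{\Phi^{(n,n/2)}}$ as the sum of the $k$ forms $F_{\sigma_{n/2},d}$ together with the cross terms $\sum_{i\not\equiv j} (x_iy_i-x_jy_j)^2$. The strategy is then to recognize each summand as a square and to track, through the form--map correspondence recalled in Section~\ref{map_form}, whether it produces a completely positive or a completely copositive elementary map.

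First I would handle the diagonal blocks. Because $n/k=2$ and $n>2$, the permutation $\sigma_{n/2}$ is a fixed-point-free involution, i.e.\ a product of the $n/2$ disjoint transpositions $\{d,\,d+n/2\}$. Substituting $n/k=2$ into~\eqref{decomp} kills the leading coefficient, and each $F_{\sigma_{n/2},d}$ collapses to the single form $(x_d y_{d+n/2}-x_{d+n/2}y_d)^2$, which is exactly $B_{\Phi^{(2,1)}}$ after the renaming of Proposition~\ref{prop:divide}. This is a square of the type $(x_a y_b-x_b y_a)^2$ treated in the worked example of Section~\ref{map_form}, whose corresponding linear map has the form $A\mapsto V A^{\mathrm t} V^{\mathrm t}$ (here $V=|b\rangle\langle a|-|a\rangle\langle b|$) and is therefore completely copositive.

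Next I would treat the cross terms. Writing $(x_i y_i-x_j y_j)^2=\langle Y|S_X|Y\rangle$ one reads off $S_X=V_{ij}\,(XX^{\mathrm t})\,V_{ij}^{\mathrm t}$ with $V_{ij}=|i\rangle\langle i|-|j\rangle\langle j|$, so each such term corresponds to the completely positive map $A\mapsto V_{ij} A V_{ij}^{\mathrm t}$. Summing over the relevant pairs, the second block of~\eqref{dec_biquad2} corresponds to a completely positive map. Combining the two contributions, the identity~\eqref{dec_biquad2} exhibits $\Phi^{(n,n/2)}$ as a completely copositive map (from the $F_{\sigma_{n/2},d}$ terms) plus a completely positive map (from the cross terms), which is precisely the form $P+Q^{\Gamma}$ defining a decomposable map.

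The only delicate point is the bookkeeping of map types: one must verify that $(x_a y_b-x_b y_a)^2$ really yields a \emph{completely copositive} map, the transpose in $A\mapsto V A^{\mathrm t} V^{\mathrm t}$ being essential, whereas $(x_i y_i-x_j y_j)^2$ yields a genuinely completely positive map, so that the global splitting lands in the ``completely positive plus completely copositive'' pattern rather than merely ``completely positive plus completely positive.'' Everything else is the routine translation through the correspondence of Section~\ref{map_form}, and the conclusion that $\Phi^{(n,n/2)}$ is decomposable follows at once.
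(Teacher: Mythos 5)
Your proposal is correct and follows essentially the same route as the paper: specialize the decomposition~\eqref{dec_biquad2} to $k=n/2$, observe that the vanishing coefficient $(n/k-2)=0$ collapses each $F_{\sigma_{n/2},d}$ to the square $(x_d y_{d+n/2}-x_{d+n/2}y_d)^2$, whose associated map is completely copositive, while the cross terms $(x_iy_i-x_jy_j)^2$ yield completely positive maps, giving the decomposable form. You merely make the form--map bookkeeping (the explicit $V$ and $V_{ij}$ conjugations) more explicit than the paper does, which is a welcome clarification rather than a deviation.
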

\begin{proof} Since $n/2$ divide $n$, we have the decomposition of $B_{\Phi^{(n,n/2)}}$ as in~\eqref{dec_biquad2}. We also see that each $F_{\sigma_{n/2},d}$ in~\eqref{decomp} is of the form
\[
\begin{aligned}
F_{\sigma_{n/2},d}=&-2\,x_d\,y_d\,x_{(d+n/2)}\, y_{(d+n/2)}+x_d^2\, y_{(d+n/2)}^2+x_{(d+n/2)}^2\, y_d^2\\
=&(x_d\,y_{d+n/2}-x_{d+n/2}\,y_d)^2.
\end{aligned}
\]
Since the positive linear map corresponding to the positive semidefinite biquadratic form $(x_d\,y_{d+n/2}-x_{d+n/2}\,y_d)^2$ is completely copositive and the map corresponding to $(x_i\,y_i-x_j\,y_j)^2$ is completely positive, we can conclude that positive linear map $\Phi^{(n,n/2)}$ is decomposable.
\end{proof}

We now turn to the extremality of $\phi^{(4,k)}$. In this case, we can show that $\Phi^{(4,1)}$ and $\Phi^{(4,3)}$ are extremal  from the extremality of the senary quartic form $Q_{(4,1)}$ (Recall the proposition~\ref{psd_form}).
\begin{theorem}
$\Phi^{(4,k)}$ is an extremal positive linear map if and only if $k=1$ or $3$.
\end{theorem}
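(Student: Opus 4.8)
The plan is to treat the two directions separately. For the ``only if'' direction it is enough to rule out $k=2$, which is immediate: since $\gcd(4,2)=2\neq1$, Theorem~\ref{thm:main} already shows that $\Phi^{(4,2)}$ is not extremal (equivalently, $W_{\Phi^{(4,2)}}$ is decomposable, hence not optimal). Thus extremality of $\Phi^{(4,k)}$ forces $k\in\{1,3\}$. For the ``if'' direction I would first collapse the two remaining cases into one: because $\gcd(4,3)=1$, Corollary~\ref{coro:equiv} gives that $\Phi^{(4,3)}$ is extremal exactly when $\Phi^{(4,1)}$ is, so it remains only to prove that $\Phi^{(4,1)}$ is extremal. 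Since $\Phi^{(4,1)}$ maps $M_4(\mathbb R)$ into itself, the Choi--Lam criterion recalled in Section~\ref{map_form} reduces this to showing $B_{\Phi^{(4,1)}}\in\mathcal E(\mathcal P_{8,4})$.

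To establish the latter I would transfer extremality upward from the senary quartic $Q_{(4,1)}$, whose extremality is Proposition~\ref{psd_form}, exploiting that $Q_{(4,1)}$ is a restriction of $B_{\Phi^{(4,1)}}$: in the defining substitution $(x;y)=(p,s,u,v;q,t,v,u)$ one has exactly $x_3=y_4$ and $x_4=y_3$. So let $0\le F\le B_{\Phi^{(4,1)}}$ with $F\in\mathcal P_{8,4}$, and let $V$ be the six-dimensional subspace $\{x_3=y_4,\ x_4=y_3\}$. Restricting $F$ to $V$ yields a psd senary quartic bounded above by $Q_{(4,1)}$, so the extremality of $Q_{(4,1)}$ forces the restriction to equal $\alpha\,Q_{(4,1)}$ for some $0\le\alpha\le1$. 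The form $B_{\Phi^{(4,1)}}$ is invariant under the cyclic relabelling $(x_i,y_i)\mapsto(x_{i+1},y_{i+1})$ (indices mod $4$), which one checks directly; writing $V_0,\dots,V_3$ for the cyclic translates of $V$, the same argument applied to each translate shows that $F$ agrees with a scalar multiple of the corresponding copy of $Q_{(4,1)}$ on every $V_j$. Comparing these scalars on the pairwise intersections $V_i\cap V_j$, where several monomials of $B_{\Phi^{(4,1)}}$ are simultaneously visible, should force a single constant $\alpha$ and match every coefficient of $F$ supported on $\bigcup_j V_j$ with that of $\alpha\,B_{\Phi^{(4,1)}}$.

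What remains is to eliminate the residual form $R:=F-\alpha\,B_{\Phi^{(4,1)}}$, which satisfies $-\alpha B_{\Phi^{(4,1)}}\le R\le(1-\alpha)B_{\Phi^{(4,1)}}$ and vanishes on every $V_j$, but is not a priori identically zero because the union of the four slices does not fill $\mathbb R^8$. For this I would repeat the sparsity analysis of Lemma~\ref{lem1}: specialising individual variables and taking suitable limits to forbid any monomial of $F$ lying outside the sparse pattern of $B_{\Phi^{(4,1)}}$, and then invoking the pointwise inequality $F\ge0$ along well-chosen one-parameter curves (in the style of the discriminant estimate closing the proof of Proposition~\ref{psd_form}) to kill the finitely many coefficients of $R$ that the slices leave undetermined. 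Once $R\equiv0$ we obtain $F=\alpha\,B_{\Phi^{(4,1)}}$, hence $B_{\Phi^{(4,1)}}\in\mathcal E(\mathcal P_{8,4})$ and the extremality of $\Phi^{(4,1)}$.

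The main obstacle is precisely this residual step. An eight-variable biquadratic form carries a large number of candidate monomials, and although each is pinned down either by one of the $Q_{(4,1)}$-restrictions or by an elementary positivity test, keeping track of all of them is the ``too laborious'' bookkeeping the authors warn against for general $n$; the $\mathbb Z_4$ symmetry reduction is what renders it manageable in the case $n=4$.
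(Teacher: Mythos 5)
Your reduction steps match the paper exactly: $k=2$ is excluded by Theorem~\ref{thm:main}, Corollary~\ref{coro:equiv} collapses $k=3$ to $k=1$, extremality of $\Phi^{(4,1)}$ is reduced via Choi--Lam to extremality of the biquadratic form $B_{\Phi^{(4,1)}}$, and restricting a candidate $0\le F\le B_{\Phi^{(4,1)}}$ to the four cyclic slices, invoking Proposition~\ref{psd_form} to get scalars $\lambda_j$, and equating those scalars at common points is also precisely the paper's route. The gap is the step you yourself flag as ``the main obstacle'': eliminating the residual $R=F-\alpha B_{\Phi^{(4,1)}}$, which vanishes on the union of the four slices but not a priori on all of $\mathbb R^8$. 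Your plan for this step --- a Lemma~\ref{lem1}-style monomial-sparsity analysis plus discriminant positivity tests --- is never carried out, and you give no argument that such an elimination terminates or that every remaining coefficient of $R$ is actually reachable by those tests. As written, the proof is incomplete at its decisive point.

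What you are missing is that no further positivity or sparsity bookkeeping is needed at all: the slice information alone forces $R\equiv 0$, once it is reorganized one $y$ at a time. This is how the paper closes. Fix nonzero reals $y_1,\dots,y_4$ and set $f(x_1,x_2,x_3,x_4):=R(x;y)$, a homogeneous quadratic form in $x$ (here it matters that $F$ is biquadratic, which is the class relevant for map extremality; your stronger claim $B_{\Phi^{(4,1)}}\in\mathcal E(\mathcal P_{8,4})$ over all quartics is not what this argument, or the paper, delivers). Vanishing of $R$ on the slice $\{x_3=y_4,\ x_4=y_3\}$ says $f(x_1,x_2,y_4,y_3)\equiv 0$; writing $f=q_0(x_1,x_2)+\ell_1(x_1,x_2)\,x_3+\ell_2(x_1,x_2)\,x_4+q_1(x_3,x_4)$ and comparing degrees in $(x_1,x_2)$ forces $q_0=0$, $\ell_1y_4+\ell_2y_3=0$ and $q_1(y_4,y_3)=0$, whence $f$ is divisible by the linear form $y_3x_3-y_4x_4$. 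The other three slices likewise give divisibility by $y_2x_2-y_3x_3$, $y_1x_1-y_2x_2$ and $y_1x_1-y_4x_4$. These four linear forms are pairwise non-proportional (all $y_i\neq0$), and a nonzero quadratic form admits at most two linear factors, so $f\equiv0$ for every such $y$; continuity in $y$ then yields $F=\alpha B_{\Phi^{(4,1)}}$. This short divisibility argument replaces the entire open-ended elimination you propose, and it --- not the $\mathbb Z_4$ symmetry by itself --- is what makes the $n=4$ case tractable.
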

\begin{proof}
From the theorem~\ref{thm:main}, we know that $\Phi^{(4,2)}$ is not extremal. We also know that $\Phi^{(4,1)}$ is extremal if and only if $\Phi^{(4,3)}$ is extremal by Corollary~\ref{coro:equiv}. Therefore, 
It suffices to show that $B_{\Phi^{(4,1)}}$ is an extremal positive semidefinite biquadratic form as stated in section~\ref{map_form}. 
Suppose $F$ is a biquadratic form such that $B_{\Phi}\ge F\ge 0$. Then 
\[
Q_{(4,1)}(p,q,s,t,u,v)=B_{\Phi^{(4,1)}}\left(\begin{array}{cccc} p & s & u & v\\q & t & v & u\end{array}\right)\ge 
F\left(\begin{array}{cccc} p & s & u & v\\q & t & v & u\end{array}\right)\ge 0.
\]
From the extremeness of $Q_{(4,1)}$, we have 
\begin{equation}\label{biquad1}
F\left(\begin{array}{cccc}  p & s & u & v\\q & t & v & u\end{array}\right) =\lambda_1 Q_{(4,1)}(p,q,s,t,u,v).
\end{equation}
Since $B_{\Phi^{(4,1)}}$ is invariant under the cyclic permutation $(1234)$ applied to the subscipts of $(x_1,x_2,x_3,x_4)$ and $(y_1,y_2,y_3,y_4)$ simultaneously, we see that 
\begin{equation}\label{biquad2}
\begin{aligned}
&Q_{(4,1)}(p,q,s,t,u,v)\\
=&B_{\Phi^{(4,1)}}\left(\begin{array}{cccc}p & s & u & v\\q & t & v & u\end{array}\right)
=B_{\Phi^{(4,1)}}\left(\begin{array}{cccc} s & u & v & p\\t & v & u & q\end{array}\right)
=B_{\Phi^{(4,1)}}\left(\begin{array}{cccc} u & v & p & s\\v & u & q & t\end{array}\right)
=B_{\Phi^{(4,1)}}\left(\begin{array}{cccc} v & p & s & u\\u & q & t & v\end{array}\right).
\end{aligned}
\end{equation}
So we can similarly show that 
\begin{equation}\label{biquad3}
\begin{aligned}
F\left(\begin{array}{cccc}  s & u & v & p\\t & v & u & q\end{array}\right) &=\lambda_2 Q_{(4,1)}(p,q,s,t,u,v),\\
F\left(\begin{array}{cccc}  u & v & p & s\\v & u & q & t\end{array}\right)&=\lambda_3 Q_{(4,1)}(p,q,s,t,u,v),\\
F\left(\begin{array}{cccc}  v & p & s & u\\u & q & t & v\end{array}\right) &=\lambda_4 Q_{(4,1)}(p,q,s,t,u,v).\\
\end{aligned}
\end{equation}
By comparing the coefficients, we get $\lambda_1=\lambda_2=\lambda_3=\lambda_4$. 
In fact, we see that
\[
Q_{(4,1)}(s,t,t,s,s,t)=Q_{(4,1)}(t,s,s,t,t,s)=2(s^2-t^2)^2,
\]
and we get the following indentities
\[
\begin{aligned}
&F\left(\begin{array}{cccc}
s & t & s & t\\
t & s & t & s\end{array}\right)\\
=& \lambda_1Q_{(4,1)}(s,t,t,s,s,t)=\lambda_2Q_{(4,1)}(t,s,s,t,t,s)=\lambda_3Q_{(4,1)}(s,t,t,s,s,t)=\lambda_4Q_{(4,1)}(t,s,s,t,t,s)
\end{aligned}
\]
from \eqref{biquad1} and \eqref{biquad3}. This give rise to $\lambda_1=\lambda_2=\lambda_3=\lambda_4$.

Now, for any fixed nonzero real numbers $y_1,y_2,y_3$ and $y_4$, we define a quadratic form $f(x_1,x_2,x_3,x_4)$ by 
\[
f(x_1,x_2,x_3,x_4):=(F-\lambda_1 B_{\Phi^{(4,1)}})\left(\begin{array}{cccc}x_1 & x_2 & x_3 & x_4\\y_1 & y_2 & y_3 & y_4\end{array}\right).
\]
From the identities \eqref{biquad2} and \eqref{biquad3}, we see that 
\[
f(x_1,x_2,y_4,y_3)\equiv 0,\ f(x_1,y_3,y_2,x_4)\equiv 0,\ f(y_2,y_1,x_3,x_4)\equiv 0,\ f(y_4,x_2,x_3,y_1)\equiv 0.
\]
Note that $f(x_2,x_2,y_4,y_3)\equiv 0$ implies that $f$ is divisible by $y_3 x_3 -y_4 x_4$. Similarly, we see that $f$ is divisible by $y_2 x_2-y_3 x_3$, $y_1 x_1 -y_2 x_2$ and $y_1 x_1-y_4 x_4$. Since degree of $f$ is $2$, this leads to $f\equiv 0$. In other words, we have 
\[
(F-\lambda_1 B_{\Phi^{(4,1)}})\left(\begin{array}{cccc}x_1 & x_2 & x_3 & x_4\\y_1 & y_2 & y_3 & y_4\end{array}\right)\equiv 0
\]
whenever $y_1,y_2,y_3$ and $y_4$ are nonzero real numbers. By continuity, we conclude that 
$F=\lambda B_{\Phi}$. Therefore, $B_{\Phi}$ is extremal among biquadratic forms.
\end{proof}

\section{Conclusion}
In this paper, we have studied the extremality of the positive linear map $\Phi^{(n,k)}$  constructed by Qi and Hou \cite{xia11}, those associated entanglement witnesses $W_{\Phi^{(n,k)}}$'s are known \cite{xia} as optimal indecomposable entanglement witnesses without spanning property.
One of the interesting problems on optimal indecomposable entanglement witnesses is whether an optimal indecomposable witness $W$ exists such that the associated positive linear map is not extremal and corresponding $\mathcal P_W$ do not span the Hilbert space fully. Here, we answer this question negatively by showing that  $\Phi^{(n,k)}$ is not extremal whenever $\gcd(n,k)\neq 1$.  As a byproduct of our proof using the correspondece between positive semidefinite biquadratic forms and positive linear maps, we have reproved that $\Phi^{(n,n/2)}$ is decomposable when $n$ is even.

For the case of $\gcd(n,k)=1$, we showed that $\Phi^{(n,k)}$ is extremal if and only if $\Phi^{(n,1)}$ is extremal. In particular, we proved that $\Phi^{(4,1)}$ and $\Phi^{(4,3)}$ are extremal when $n=4$. 
Our proof for the extremality seems to be applicable for general $(n,k)$ with $\gcd(n,k)=1$. But, it is too laborious since we should check the extremality of each $B_{\Phi^{(n,1)}}$. So  a new approach which can be applicable for all cases at the same time is needed.

By the way, Chru\'{s}ci\'{n}ski and  Wudarski \cite{cw12} gave another variant $\Phi[a,b,c,d]$ of extremal Choi map between $M_4$ recently. Entanglement witnesses arising from these maps are known to be indecomposable optimal entanglement witnesses which have both spanning property and co-spanning property. We say that an entanglement witness $W$ has co-spanning property if $W^{\Gamma}$ has spanning property \cite{kye_ritsu,hakye12}. Among them, $\Phi[1,1,1,0]$ and $\Phi[1,0,1,1]$ are expected to be extremal. But, in this case, our method is not  directly applicable since the corresponding quaternary octic psd form $O_{\Phi}$ is not extremal. Therefore, it would be interesting to investigate the extremality of these maps. We also note that the extremality of maps $\Phi[1,0,p_{\theta}-1;\theta],\,\Phi[1,p_{\theta}-1,0;\theta]$ \cite{hakye12pra} is open question so far.

\begin{acknowledgments}
This work was partially supported by the Basic Science Research Program through the
National Research Foundation of Korea(NRF) funded by the Ministry of Education, Science
and Technology (Grant No. NRFK 2012-0002600 to K. -C. Ha and Grant No. NRFK 2011-0026832 to H. -S. Yu)
\end{acknowledgments}


\end{document}